\pdfoutput=1 

\documentclass{ws-rmp}

\usepackage{amsmath}
\usepackage{amssymb}
\usepackage{amsfonts}
\usepackage{bm}
\usepackage{mathtools}
\usepackage{graphicx}
\usepackage[numbers,sort&compress]{natbib}
\usepackage{physics}
\usepackage{comment}

\newcommand{\HA}{\mathcal{H}}

\begin{document}

\markboth{G. Kimura}{Hiai--Petz Skew Informations and Sharp Uncertainty Relations}

\title{Hiai--Petz Skew Informations and Sharp Uncertainty Relations for a Single Observable}

\author{GEN KIMURA}
\address{
Graduate School of Information Sciences, Tohoku University,\\
6-3-09 Aoba, Aramaki-aza, Aoba-ku, Sendai 980-8579, Japan\\
gen.kimura.quant@gmail.com
}

\maketitle


\begin{abstract}
We study intrinsic quantum uncertainty associated with a single observable from the viewpoint of skew information. By relaxing ordinary additivity while retaining the other fundamental uncertainty requirements, we introduce a class of Hiai--Petz nonadditive skew informations that contains Hansen's metric-adjusted skew informations as the boundary case $\theta=1$.
Our main quantitative result is a sharp single-observable uncertainty relation for the full Hiai--Petz class. For a fixed state, the optimal coefficient depends only on its largest and smallest eigenvalues and remains sharp even after the maximal classical contribution to the variance is retained. The general theorem yields, as special cases, sharp bounds for metric-adjusted skew informations, the SLD quantum Fisher information, and the power-commutator family.
We further show that $K_s(\rho,A)=\frac12\|[\rho^s,A]\|_{\mathrm{HS}}^2$, $1/2\le s<1$, is realized within the Hiai--Petz framework through Stolarsky operator means, thereby establishing its convexity for $1/2<s<1$.
More generally, for each fixed operator monotone function $f$, the Hiai--Petz spectral kernel interpolates geometrically between the metric-adjusted skew information $I^f$ and the common endpoint $K_1$. The resulting family obeys an exact power-trace-weighted composition law, replacing ordinary additivity in the interior $0<\theta<1$.
\end{abstract}

\keywords{
skew information;
operator monotone function;
quantum Fisher information;
uncertainty relation;
variance;
quasi-entropy;
operator mean;
Stolarsky mean
}

\ccode{Mathematics Subject Classification 2020:
47A63, 47B65, 81P17, 81P45, 94A17}

\section{Introduction}

It is a great pleasure and honor to contribute this article to the special volume celebrating the eightieth birthday of Professor Fumio Hiai.
Professor Hiai has made profound and lasting contributions to operator theory, operator algebras, matrix analysis, quantum probability, and quantum information theory.
I have always been particularly impressed by his ability to identify the essential structure of a difficult mathematical problem and to reveal a direct and elegant path to its resolution.
This clarity has shaped not only his research contributions but also his remarkable educational influence, helping generations of students and researchers enter sophisticated areas of operator theory and matrix analysis through their essential ideas.

Although I was not formally a student of Professor Hiai, I learned much of the foundation of operator theory from his books, most notably his Japanese textbook \textit{Hilbert Spaces and Linear Operators} \cite{HiaiYanagi1995}.
For me, as for many students and researchers in Japan, this book provided an accessible yet rigorous introduction to Hilbert space theory, spectral theory, and linear operators.
His later books and monographs, including \textit{Introduction to Matrix Analysis and Applications} with D\'enes Petz \cite{HiaiPetz2014}, have continued to influence my understanding of matrix analysis, operator inequalities, and their connections with quantum information theory.

I was fortunate to belong to the same mathematics group as Professor Hiai during my time as a postdoctoral researcher at Tohoku University from 2005 to 2008, and I have benefited from his advice and kind support on many occasions since then.
It is therefore a particular honor for me to have returned to Tohoku University in 2026 and once again to belong to the same academic community.
Although we have not had the opportunity to collaborate directly, my own research has repeatedly crossed paths with mathematical themes developed by Professor Hiai and with researchers closely connected to him.
I have also had the privilege of seeing firsthand his generosity, modesty, and openness toward younger researchers.

The subject of the present article lies at the intersection of several themes central to Professor Hiai's work: operator monotone functions, operator means, quantum information geometry, convexity, and noncommutativity.
In this sense, the present study is also an opportunity to revisit these ideas from the viewpoint of quantum uncertainty and skew information \cite{WignerYanase1963}.

Quantum uncertainty is most commonly formulated as a trade-off involving two or more observables.
Skew information has played an important role in this direction.
Beginning with Luo's Wigner--Yanase-based uncertainty relation \cite{Luo2003}, this line of research has since been developed in many directions; see, e.g., \cite{LuoZhang2004,Kosaki2005,YanagiFuruichiKuriyama2005,Yanagi2010,Furuichi2010,GibiliscoIsola2007,Yanagi2011,FuruichiYanagi2012,YanagiFuruichiKuriyama2013,ChenFeiLong2016,RenLiYeLi2021}.
These results primarily concern trade-offs among the quantum uncertainties associated with different observables.

There is, however, a complementary question that already arises for a single observable: how much of its variance in a given state should be regarded as intrinsically quantum?
For a mixed state, the variance contains uncertainty originating from classical statistical mixing as well as uncertainty generated by the noncommutativity between the state and the observable.
The Wigner--Yanase skew information was introduced from an information-content viewpoint sensitive to this noncommutativity \cite{WignerYanase1963}, and Luo later emphasized its role as a quantum contribution to uncertainty in a mixed state \cite{LuoQuantumClassical,Luo2005,Luo2006}.
From this viewpoint, an intrinsic quantum uncertainty $Q_\rho(A)$ that agrees with the variance on pure states and is convex in the state automatically satisfies the single-observable bound
\begin{equation}
V_\rho(A)\ge Q_\rho(A).
\label{eq:intro-single-observable-UR}
\end{equation}
In our recent work \cite{YamashitaMayumiKimura2026}, we studied inequalities of this form as \emph{single-observable uncertainty relations} and showed that, for a fixed mixed state, the coefficient of $Q_\rho(A)$ can often be sharpened beyond the universal value one.
We also showed that the same optimal coefficient can be retained after adding a maximal classical contribution to the variance.
The present article is primarily concerned with this single-observable aspect of skew information.

The Wigner--Yanase--Dyson (WYD) family provides a fundamental extension of the original skew information, with convexity established by Lieb in his proof of the Wigner--Yanase--Dyson conjecture \cite{Lieb1973}.
A much broader structure emerges from quantum information geometry.
Petz's classification of monotone quantum Fisher metrics in terms of standard operator monotone functions \cite{Petz1996,Petz2002} leads to Hansen's metric-adjusted skew information \cite{Hansen2008}, which contains the WY and WYD skew informations as well as one quarter of the symmetric logarithmic derivative (SLD) quantum Fisher information.
Metric-adjusted skew informations have also found applications in quantum correlations and quantum metrology \cite{GibiliscoGirolamiHansen2021}, while quasi-entropies and their extensions developed by Hiai and Petz provide a flexible framework in which convexity, operator means, and noncommutative information quantities can be treated beyond the metric setting \cite{HiaiPetz2012,HiaiPetz2013}.

Within this single-observable framework, ordinary tensor-product additivity, although natural in the original Wigner--Yanase formulation, is not indispensable.
We retain instead five properties that are directly relevant to intrinsic quantum uncertainty: positivity, faithfulness to the noncommutativity $[\rho,A]$, agreement with the variance on pure states, convexity under mixing, and unitary covariance.
This gives a natural setting in which to ask how far skew-information-type uncertainty can be extended beyond Hansen's metric-adjusted class.

Our first result is that a normalized commutator restriction of the Hiai--Petz quasi-entropy-type functional gives such an extension.
For every regular standard operator monotone function $f$ and $0<\theta\le1$, we define a Hiai--Petz nonadditive skew information $\mathcal I_\rho^{f,\theta}(A)$.
The Hiai--Petz joint convexity theorem implies that it satisfies all five uncertainty requirements, while Hansen's metric-adjusted skew information is recovered exactly at the boundary $\theta=1$.

A central example is provided by the power-commutator family
\begin{equation}
K_s(\rho,A)=\frac12\|[\rho^s,A]\|_{\mathrm{HS}}^2,
\quad
\frac12\le s\le1.
\end{equation}
It continuously connects the Wigner--Yanase skew information $K_{1/2}=I^{\mathrm{WY}}$ to the Hilbert--Schmidt commutator $K_1$.
We show that, for $1/2\le s<1$, $K_s$ is itself a Hiai--Petz skew information, realized through a family of Stolarsky operator means.
In particular, this representation establishes the nontrivial convexity of $K_s$ in the intermediate range $1/2<s<1$.
For $1/2<s\le1$, the family lies outside Hansen's metric-adjusted class as a dimension-independent construction, as is also reflected in its nonadditive tensor-product composition law.

The Hiai--Petz realization of $K_s$ leads to a broader interpolation picture.
If $k_f$ denotes the pairwise spectral kernel of Hansen's metric-adjusted skew information and $k_{K_1}(x,y)=(x-y)^2$ is that of $K_1$, then
\begin{equation}
k_{f,\theta}(x,y)
=
k_{K_1}(x,y)^{1-\theta}
k_f(x,y)^\theta.
\label{eq:intro-kernel-interpolation}
\end{equation}
Thus, after adjoining the continuous endpoint $\theta=0$, every fixed metric-adjusted skew information is connected to the same quantity $K_1$ by a pairwise geometric interpolation.
The power family has a distinguished position in this structure: for each fixed $s$, $K_s$ lies on the Hiai--Petz interpolation path associated with a Stolarsky representing function $f_s$, while, as $s$ varies, both $f_s$ and the interpolation parameter $\theta_s=2(1-s)$ change.
Hence the power family traces a curve through the Hiai--Petz class from the Wigner--Yanase skew information to the common endpoint $K_1$, rather than following a single fixed-$f$ path.

A particularly important fixed-$f$ path is obtained from the SLD representing function.
In this case the Hiai--Petz parameter continuously connects $K_1$ to one quarter of the SLD quantum Fisher information.
This example is especially significant because one quarter of the SLD quantum Fisher information is the maximal intrinsic quantum uncertainty among functionals satisfying pure-state normalization and convexity.
The corresponding interpolation therefore connects the common endpoint $K_1$ directly to the strongest universal single-observable quantum uncertainty within the present framework.

The nonadditive character of the extension is controlled rather than arbitrary.
We derive an exact power-trace-weighted composition law for product states and additive observables.
At $\theta=1$ it reduces to the ordinary additivity of metric-adjusted skew information, whereas for $0<\theta<1$ it yields a nontrivial scaling under the addition of a mixed ancillary state.
This shows, in particular, that the interior Hiai--Petz family lies outside Hansen's metric-adjusted class as a dimension-independent family.

The second and principal part of the article concerns the sharp form of the single-observable uncertainty relation introduced above.
For a fixed mixed state, we determine the largest coefficient multiplying $\mathcal I_\rho^{f,\theta}(A)$ in a variance bound.
Our main result solves this optimization problem for the full Hiai--Petz class.
For every regular standard operator monotone function $f$ and every $0<\theta\le1$, the optimal state-dependent coefficient is obtained explicitly and, remarkably, depends only on $\lambda_{\min}$ and $\lambda_{\max}$, although $\mathcal I_\rho^{f,\theta}(A)$ itself generally depends on the full spectrum of $\rho$.
The same coefficient remains sharp after the maximal classical contribution to the variance is retained.
Specialization of the general theorem yields sharp uncertainty relations for Hansen's metric-adjusted skew informations and for the nonadditive Hiai--Petz family, including the fixed-SLD interpolation and the power-commutator family.

The paper is organized as follows.
In Sec.~\ref{sec:single-observable-uncertainty}, we formulate intrinsic quantum uncertainty and explain why additivity is not imposed as a defining axiom.
We review Hansen's metric-adjusted skew information and introduce the Hiai--Petz skew information.
We then realize the power-commutator family through Stolarsky operator means and establish its position within the general Hiai--Petz interpolation structure.
We next examine the fixed-SLD interpolation to one quarter of the SLD quantum Fisher information and derive the power-trace-weighted composition law that distinguishes the interior Hiai--Petz family from the metric-adjusted class.
In Sec.~\ref{sec:optimal-single-observable}, we recall the maximal classical contribution to the variance and prove the sharp fixed-state Hiai--Petz uncertainty relation.
We then specialize the result to Hansen's class, the fixed-SLD interpolation, and the power-commutator family.
Section~\ref{sec:conclusion} summarizes the results and discusses the spectral structure underlying the optimal bounds.

\section{Intrinsic quantum uncertainty and Hiai--Petz skew information}
\label{sec:single-observable-uncertainty}

\subsection{Axioms on intrinsic quantum uncertainty}

The starting point of the present work is to regard skew information (SI) as a measure of the intrinsic quantum uncertainty associated with a single observable. 
The idea goes back to the original information-content interpretation of Wigner and Yanase \cite{WignerYanase1963}. 
It is also closely related to Luo's observation that, for a mixed state, the variance contains two conceptually different contributions: classical uncertainty originating from statistical mixing and intrinsic quantum uncertainty \cite{LuoQuantumClassical,Luo2005,Luo2006}. 
Motivated by these viewpoints, we introduce the following notion of intrinsic quantum uncertainty, which essentially corresponds to Luo's notion of \emph{quantum uncertainty} \cite{LuoQuantumClassical}. 
\begin{definition}
A functional $Q_\rho(A)$ is called an \emph{intrinsic quantum uncertainty} for a state $\rho$ and an observable $A$ if it satisfies the following properties (Q1)-(Q5):  
\begin{enumerate}
\item[(Q1)] \textbf{Positivity:}
\begin{equation}
Q_\rho(A)\ge0.
\label{eq:Q1}
\end{equation}

\item[(Q2)] \textbf{Faithfulness to noncommutativity:}
\begin{equation}
Q_\rho(A)=0
\quad\Longleftrightarrow\quad
[\rho,A]=0.
\label{eq:Q2}
\end{equation}

\item[(Q3)] \textbf{Pure-state normalization:}
for every pure state $P = \ketbra{\phi}{\phi}$, 
\begin{equation}
Q_P(A)=V_P(A).
\label{eq:Q3}
\end{equation}

\item[(Q4)] \textbf{Convexity in the state:}
for $0\le p\le1$ and $\rho_1,\rho_2 \in {\cal D}(\HA)$, 
\begin{equation}
Q_{p\rho_1+(1-p)\rho_2}(A)
\le
pQ_{\rho_1}(A)+(1-p)Q_{\rho_2}(A).
\label{eq:Q4}
\end{equation}

\item[(Q5)] \textbf{Unitary covariance:}
for every unitary operator $U$,
\begin{equation}
Q_{U\rho U^\dagger}(UAU^\dagger)
=
Q_\rho(A).
\label{eq:Q5}
\end{equation}
\end{enumerate}
\end{definition}
Each of the above requirements has a direct physical interpretation.
Condition (Q1) is the minimal requirement for an uncertainty measure. 
Condition (Q2) expresses that $Q_\rho(A)$ quantifies uncertainty specifically associated with the noncommutativity between the state and the observable. 
Indeed, it is elementary to show that if $\rho$ and $A$ do not commute, then $V_\rho(A) > 0$ (See e.g., \cite{YamashitaMayumiKimura2026}). 
Condition (Q3) expresses an important consistency requirement: For a pure state there is no uncertainty arising from classical mixing, and hence the intrinsic quantum uncertainty should coincide with the full variance. 
This is also central to Luo's interpretation of skew information as the genuinely quantum part of uncertainty
\cite{Luo2005,Luo2006}. 
Condition (Q4) is particularly important, since it requires that probabilistic mixing of states cannot increase the intrinsic quantum part of the uncertainty. 
This is the counterpart, in the present setting, of the original Wigner--Yanase requirement that information should decrease under mixing.
Condition (Q5) ensures that the quantity is independent of the particular Hilbert-space basis used to represent the state and observable. 
In particular, it contains the invariance under isolated dynamics considered by Wigner and Yanase whenever $A$ is a conserved observable.

\begin{remark}\label{rem:Tsallis-analogy} 
In the original work of Wigner and Yanase, additivity was also imposed as a natural requirement on the information content:
\begin{enumerate}
\item[(Q6)] \textbf{Additivity:} For product states and additive observables,
\begin{equation}
Q_{\rho_1\otimes\rho_2}(A_1\otimes I+I\otimes A_2)=Q_{\rho_1}(A_1)+Q_{\rho_2}(A_2).
\label{eq:Q6-additivity}
\end{equation}
\end{enumerate}
For the purpose of single-observable uncertainty, however, condition (Q6) is not essential. We therefore consider a broader class in which ordinary additivity is not imposed as a defining axiom, while conditions (Q1)--(Q5) are retained. For this reason, we shall also refer to such an intrinsic quantum uncertainty as a \emph{nonadditive extension of skew information}.

There is a useful analogy with nonadditive generalizations of entropy. The (quantum) Tsallis entropy \cite{Tsallis1988},
\begin{equation}
S_q(\rho)=\frac{1-\Tr\rho^q}{q-1},\quad q>0,\quad q\ne1,
\label{eq:Tsallis}
\end{equation}
reduces to the von Neumann entropy in the limit $q\to1$, while ordinary additivity is replaced by the \emph{pseudo-additivity} law
\begin{equation}
S_q(\rho\otimes\sigma)=S_q(\rho)+S_q(\sigma)+(1-q)S_q(\rho)S_q(\sigma).
\label{eq:Tsallis-pseudoadditivity}
\end{equation}
Thus, the loss of ordinary additivity does not imply the loss of a controlled composition law. As we shall see later, a closely analogous phenomenon occurs for the Hiai--Petz family studied below, where ordinary additivity is replaced by a deformed composition law; see Sec.~\ref{sec:deformed-additivity}.
\end{remark}

\begin{remark}
Wigner and Yanase considered an even stronger composition requirement.
For a bipartite state $\rho_{12}$, with
$\rho_j=\Tr_{\bar j}\rho_{12}$ and
\begin{equation}
A_{12}
=
A_1\otimes I+I\otimes A_2,
\end{equation}
the proposed superadditivity property is
\begin{equation}
Q_{\rho_{12}}(A_{12})
\ge
Q_{\rho_1}(A_1)+Q_{\rho_2}(A_2).
\label{eq:WY-superadditivity}
\end{equation}
Wigner and Yanase proved this relation when the joint state is pure, but the conjecture fails for general mixed states \cite{Hansen2007}. 
\end{remark}

\subsubsection{Examples} The prototype of intrinsic quantum uncertainty is, of course, the Wigner--Yanase (WY) skew information \cite{WignerYanase1963}. Using the eigenvalue notation introduced above, the WY skew information can be written as
\begin{equation}
I_\rho^{\mathrm{WY}}(A):=-\frac12\Tr[\sqrt{\rho},A]^2
=
\sum_{i<j}
(\sqrt{\lambda_i}-\sqrt{\lambda_j})^2
|A_{ij}|^2.
\label{WY}
\end{equation}
Its one-parameter extension, known as the Wigner--Yanase--Dyson (WYD) skew information, provides a further family of examples:
\begin{equation}
I_{\rho,\alpha}^{\mathrm{WYD}}(A):=-\frac12\Tr[\rho^\alpha,A][\rho^{1-\alpha},A]
=
\sum_{i<j}
(\lambda_i^\alpha-\lambda_j^\alpha)
(\lambda_i^{1-\alpha}-\lambda_j^{1-\alpha})
|A_{ij}|^2,\quad 0<\alpha<1.
\label{eq:WYD}
\end{equation}
A fundamental result of Lieb \cite{Lieb1973}, originally proving the Wigner--Yanase--Dyson conjecture, established the convexity of $I_{\rho,\alpha}^{\mathrm{WYD}}(A)$ with respect to the state $\rho$ for every $0<\alpha<1$.

Another important example is the symmetric logarithmic derivative (SLD) quantum Fisher information \cite{BraunsteinCaves1994}:
\begin{equation}
\frac14F_Q^{\mathrm{SLD}}(\rho,A)
=
\sum_{\substack{i<j\\ \lambda_i+\lambda_j>0}}
\frac{(\lambda_i-\lambda_j)^2}{\lambda_i+\lambda_j}|A_{ij}|^2.
\label{eq:SLD-QFI-quarter}
\end{equation}
The factor $1/4$ is chosen so that the pure-state normalization (Q3) is satisfied. 
Finally, these examples are encompassed by Hansen's framework of metric-adjusted skew information \cite{Hansen2008}, which will be reviewed in detail later. 

A particularly simple example that lies outside Hansen's metric-adjusted class in general dimensions\footnote{For a single qubit, $K_1$ coincides with one quarter of the SLD quantum Fisher information because $\lambda_1+\lambda_2=1$. This coincidence, however, is specific to dimension two and is not stable under tensor products.} is
\begin{equation}
K_1(\rho,A):=\frac12\|[\rho,A]\|_{\mathrm{HS}}^2
=
\sum_{i<j}(\lambda_i-\lambda_j)^2|A_{ij}|^2.
\label{eq:K1-intro}
\end{equation}
This Hilbert--Schmidt commutator quantity has also appeared previously in the literature.
In particular, Girolami \cite{Girolami2014} introduced
$-\frac14\Tr[\rho,A]^2=K_1(\rho,A)/2$
as an experimentally accessible lower bound on the Wigner--Yanase skew information. 
In the present work, we regard the normalization \eqref{eq:K1-intro} as a skew information in its own right, namely, as an intrinsic quantum uncertainty satisfying (Q1)--(Q5). 
In particular, its convexity in $\rho$ follows immediately from the linearity of $\rho\mapsto[\rho,A]$ and the convexity of the squared Hilbert--Schmidt norm. 
As we show below, $K_1$ plays a distinguished structural role in the present framework: it emerges as the common continuous endpoint from which the Hiai--Petz family interpolates to Hansen's metric-adjusted skew informations; see Sec.~\ref{sec:interpolation-structure}.

A natural generalization is obtained by replacing $\rho$ with a fractional power $\rho^s$. For $1/2\le s\le1$, we define
\begin{equation}
K_s(\rho,A):=\frac12\|[\rho^s,A]\|_{\mathrm{HS}}^2
=
\sum_{i<j}(\lambda_i^s-\lambda_j^s)^2|A_{ij}|^2.
\label{eq:Ks-intro}
\end{equation}
The endpoint $s=1$ recovers \eqref{eq:K1-intro}, whereas $s=1/2$ coincides with the Wigner--Yanase skew information. The requirements (Q1)--(Q3) and (Q5) are immediate for $K_s$. The remaining convexity condition (Q4) for $1/2<s<1$ requires a separate argument, since the standard methods based on Lieb's concavity theorem \cite{Lieb1973}, Effros' perspective method \cite{Effros2009}, and Hansen's theory of metric-adjusted skew information \cite{Hansen2008} do not directly apply.
The key observation of the present work is that the intermediate regime $1/2<s<1$ can instead be placed within the Hiai--Petz theory of quasi-entropy-type functionals \cite{HiaiPetz2013}. This will identify \eqref{eq:Ks-intro} as a concrete example of the normalized Hiai--Petz skew information introduced below.
For $1/2<s\le1$, these quantities lie outside Hansen's metric-adjusted class as a dimension-independent family. Indeed, every metric-adjusted skew information satisfies the ordinary additivity property (Q6), whereas, as shown in Sec.~\ref{sec:deformed-additivity}, $K_s$ instead obeys a different, power-trace-weighted composition law.

\subsubsection{Single-observable uncertainty relations}
An important consequence of (Q3) and (Q4) is that every intrinsic quantum uncertainty $Q_\rho(A)$ provides a universal lower bound on the variance.
Indeed, the variance is concave in the state, since the first term in
\begin{equation}
V_\rho(A)=\Tr(\rho A^2)-\bigl(\Tr(\rho A)\bigr)^2
\end{equation}
is linear in $\rho$, while the second term is concave as the composition of the linear functional $\rho\mapsto\Tr(\rho A)$ with the concave function $x\mapsto -x^2$.
Hence, using $\rho=\sum_i\lambda_i\ketbra{i}{i}$, we obtain
\begin{equation}
V_\rho(A)
\ge
\sum_i\lambda_i V_{\ketbra{i}{i}}(A)
\overset{\mathrm{(Q3)}}{=}
\sum_i\lambda_i Q_{\ketbra{i}{i}}(A)
\overset{\mathrm{(Q4)}}{\ge}
Q_\rho(A).
\label{eq:Q-variance-bound}
\end{equation}
Thus every intrinsic quantum uncertainty automatically gives rise to an uncertainty relation of the form
\begin{equation}
V_\rho(A)\ge Q_\rho(A).
\label{eq:single-observable-UR}
\end{equation}
In our recent work \cite{YamashitaMayumiKimura2026}, inequalities of this type were interpreted as \emph{single-observable uncertainty relations}, since the lower bound is determined entirely by the state $\rho$ and the single observable $A$, without introducing a second observable.

The maximal single-observable quantum uncertainty measure within the present framework is one quarter of the SLD quantum Fisher information. T\'oth and Petz conjectured, and established in several special cases, that this quantity is given by the convex roof of the variance \cite{TothPetz2013}; the conjecture was subsequently proved in full generality by Yu \cite{Yu2013}:
\begin{equation}
\frac14F_Q^{\mathrm{SLD}}(\rho,A)
=
\inf_{\rho=\sum_k p_k P_k}
\sum_k p_k V_{P_k}(A),
\label{eq:Yu}
\end{equation}
where the infimum is taken over all pure-state decompositions of $\rho$. Consequently, any functional satisfying (Q3) and (Q4) obeys
\begin{equation}
Q_\rho(A)
\le
\frac14F_Q^{\mathrm{SLD}}(\rho,A)
\le
V_\rho(A).
\label{eq:Q-QFI-bound}
\end{equation}
In this sense, one quarter of the SLD quantum Fisher information is the maximal single-observable quantum uncertainty measure, or equivalently, it provides the strongest universal lower bound on the variance among all functionals satisfying (Q3) and (Q4).

\subsection{Metric-adjusted skew information and Hiai--Petz skew information}

\subsubsection{Hansen's metric-adjusted skew information}

Before introducing the Hiai--Petz skew information, we first recall Hansen's metric-adjusted skew information \cite{Hansen2008}. 
Recall that a function $f:(0,\infty)\to(0,\infty)$ is called operator monotone if $0<X\le Y$ implies $f(X)\le f(Y)$ for positive definite operators $X$ and $Y$; see, e.g., \cite{Loewner1934,Donoghue1974,Bhatia2007}. 
A standard operator monotone function is an operator monotone function satisfying
\begin{equation}
f(1)=1,\quad f(t)=t f(t^{-1}).
\label{eq:standard-f}
\end{equation}
We say that $f$ is regular if
\begin{equation}
f(0):=\lim_{t\downarrow0}f(t)>0.
\label{eq:regular-f}
\end{equation}
The associated scalar mean is defined by
\begin{equation}\label{eq:Mf}
M_f(x,y):=y f(x/y),\quad x,y>0.
\end{equation}
By the symmetry condition in \eqref{eq:standard-f}, $M_f$ is symmetric:
\begin{equation}
M_f(x,y)=M_f(y,x).
\end{equation}
For regular $f$, $M_f$ admits a continuous extension to $[0,\infty)^2$, given by
\begin{equation}
        M_f(x,0)=M_f(0,x)=xf(0),\quad x\ge0.
\label{eq:Mf-boundary}
\end{equation}
For a faithful state $\rho$, let $L_\rho$ and $R_\rho$ denote the left- and right-multiplication superoperators,
\begin{equation}
L_\rho(X):=\rho X,\quad R_\rho(X):=X\rho.
\label{eq:left-right}
\end{equation}
By the Kubo--Ando correspondence \cite{KuboAndo1980}, every standard operator monotone function $f$ determines an operator mean, whose scalar restriction is $M_f$. 
Since $L_\rho$ and $R_\rho$ commute, the corresponding operator mean can be written as
\begin{equation}
M_f(L_\rho,R_\rho)=R_\rho f(L_\rho R_\rho^{-1}).
\label{eq:Mf-superoperator}
\end{equation}
The monotone quantum Fisher metric associated with $f$ is then given by
\begin{equation}
\gamma_\rho^f(X,Y):=\left\langle X,M_f(L_\rho,R_\rho)^{-1}(Y)\right\rangle_{\mathrm{HS}}
=\Tr X^\dagger M_f(L_\rho,R_\rho)^{-1}(Y).
\label{eq:metric}
\end{equation}
Petz's classification theorem identifies such metrics with standard operator monotone functions \cite{Petz1996,Petz2002}.

For a regular $f$, Hansen's metric-adjusted skew information \cite{Hansen2008} is defined by
\begin{equation}
I_\rho^f(A):=\frac{f(0)}{2}\gamma_\rho^f\bigl(i[\rho,A],i[\rho,A]\bigr).
\label{eq:MASI-intro}
\end{equation}
The commutator direction appearing here has a natural geometric interpretation. 
Indeed, the observable $A$ generates the unitary orbit
$\rho_t=e^{-itA}\rho e^{itA}$ through $\rho$, whose tangent vector at $t=0$ is
\begin{equation}
\left.\frac{d}{dt}\rho_t\right|_{t=0}=i[\rho,A].
\end{equation}
Thus, up to the normalization factor $f(0)/2$, the metric-adjusted skew information is the squared length, with respect to the monotone metric $\gamma_\rho^f$, of the tangent vector generated by $A$. In particular, it vanishes precisely when the unitary orbit is stationary at $\rho$, that is, when $[\rho,A]=0$.

Using the notation introduced above, this can equivalently be written as
\begin{equation}
I_\rho^f(A)=f(0)\sum_{\substack{i<j\\ \lambda_i+\lambda_j>0}}
\frac{(\lambda_i-\lambda_j)^2}{M_f(\lambda_i,\lambda_j)}|A_{ij}|^2.
\label{eq:MASI-spectral}
\end{equation}
For non-faithful states, \eqref{eq:MASI-spectral} is understood as the continuous extension from faithful states; equivalently, pairs with $\lambda_i=\lambda_j=0$ do not contribute.
Notice that Hansen's results \cite{Hansen2008} imply that metric-adjusted skew information satisfies all the requirements (Q1)--(Q5), as well as the additivity property (Q6).

The Wigner--Yanase and Wigner--Yanase--Dyson skew informations, as well as one quarter of the SLD quantum Fisher information, all arise as special cases of Hansen's metric-adjusted skew information. 
The WY skew information corresponds to 
\begin{equation}
f_{\mathrm{WY}}(t):=\frac{(1+\sqrt{t})^2}{4},
\quad
f_{\mathrm{WY}}(0)=\frac14.
\label{eq:f-WY}
\end{equation}
More generally, the WYD skew information with $0<\alpha<1$ corresponds to
\begin{equation}
f_{\alpha}^{\mathrm{WYD}}(t):=
\alpha(1-\alpha)
\frac{(t-1)^2}{(t^\alpha-1)(t^{1-\alpha}-1)},
\quad
f_{\alpha}^{\mathrm{WYD}}(0)=\alpha(1-\alpha).
\label{eq:f-WYD}
\end{equation}
The WY case is recovered at $\alpha=1/2$. Finally, one quarter of the SLD quantum Fisher information corresponds to the arithmetic-mean function
\begin{equation}
f_{\mathrm{SLD}}(t):=\frac{1+t}{2},
\quad
f_{\mathrm{SLD}}(0)=\frac12.
\label{eq:f-SLD}
\end{equation}
Substituting these functions into the scalar mean $M_f(x,y)=y f(x/y)$ in \eqref{eq:Mf}, the spectral representation \eqref{eq:MASI-spectral} immediately recovers the original expressions \eqref{WY}, \eqref{eq:WYD}, and \eqref{eq:SLD-QFI-quarter}.

\subsubsection{Hiai-Petz skew information}

We now turn from Hansen's metric-adjusted framework to the broader class of quasi-entropy-type functionals introduced by Hiai and Petz \cite{HiaiPetz2013}. For positive definite operators $P,Q$, it is convenient to introduce the positive sesquilinear form
\begin{equation}
\gamma_{P,Q}^{f,\theta}(X,Y)
:=
\Tr X^\dagger M_f(L_P,R_Q)^{-\theta}(Y),
\quad 0<\theta\le1.
\label{eq:HP-sesquilinear}
\end{equation}
The Hiai--Petz functional is the associated quadratic form
\begin{equation}
I_f^\theta(P,Q,X)
=
\gamma_{P,Q}^{f,\theta}(X,X).
\label{eq:HP-original}
\end{equation}
In particular, when $\theta=1$ and $P=Q=\rho$, the above sesquilinear form reduces to Petz's monotone metric:
\begin{equation}
\gamma_{\rho,\rho}^{f,1}(X,Y)
=
\gamma_\rho^f(X,Y).
\label{eq:HP-metric-reduction}
\end{equation}
The joint convexity theorem of Hiai and Petz implies that, for operator monotone $f$ and $0<\theta\le1$,
\begin{equation}\label{eq:HPJconvex}
(P,Q,X)\longmapsto I_f^\theta(P,Q,X)
\end{equation}
is jointly convex. Thus, the case $\theta=1$ recovers the monotone-metric structure when $P=Q$, whereas $0<\theta<1$ provides a genuine extension beyond it.

Motivated by Hansen's construction, we normalize the commutator restriction of \eqref{eq:HP-original} as
\begin{equation}
\mathcal I_\rho^{f,\theta}(A):=\frac{f(0)^\theta}{2}I_f^\theta\bigl(\rho,\rho,i[\rho,A]\bigr),\quad 0<\theta\le1.
\label{eq:HPskew-intro}
\end{equation}
In this paper, we shall refer to this normalized commutator restriction as the \emph{Hiai--Petz nonadditive skew information}, or, for short, the \emph{HP skew information}. Here, the term ``nonadditive'' indicates that the additivity property (Q6) is not imposed and is generally lost within this broader class. Instead, as we shall show in Sec.~\ref{sec:deformed-additivity}, the entire Hiai--Petz class satisfies a deformed additivity law.

Using the same eigenvalue notation, HP skew information can equivalently be written as
\begin{equation}
\mathcal I_\rho^{f,\theta}(A)=f(0)^\theta\sum_{\substack{i<j\\ \lambda_i+\lambda_j>0}}\frac{(\lambda_i-\lambda_j)^2}{M_f(\lambda_i,\lambda_j)^\theta}|A_{ij}|^2.
\label{eq:HPskew-spectral}
\end{equation}
For non-faithful states, \eqref{eq:HPskew-spectral} is understood as the continuous extension from faithful states; equivalently, pairs with $\lambda_i=\lambda_j=0$ do not contribute.

At $\theta=1$, \eqref{eq:HPskew-intro} reduces exactly to Hansen's metric-adjusted skew information \eqref{eq:MASI-intro}, and correspondingly \eqref{eq:HPskew-spectral} reduces to \eqref{eq:MASI-spectral}. For $0<\theta<1$, however, the Hiai--Petz construction goes beyond the metric-adjusted framework and yields a broad family of intrinsic quantum uncertainty measures. We first show that the power-commutator quantities \eqref{eq:Ks-intro} form a concrete family within this framework. This realization will then lead naturally to a general interpolation interpretation of the parameter $\theta$.

\begin{proposition}
\label{prop:HP-Q1-Q5}
For every regular standard operator monotone function $f$ and every $0<\theta\le1$, the normalized Hiai--Petz skew information \eqref{eq:HPskew-intro} satisfies the requirements (Q1)--(Q5).
\end{proposition}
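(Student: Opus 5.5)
The plan is to check (Q1), (Q2), (Q3) and (Q5) directly from the spectral representation \eqref{eq:HPskew-spectral}, and to derive (Q4) from the Hiai--Petz joint convexity \eqref{eq:HPJconvex}.

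\emph{(Q1) and (Q2).} Since $f$ is regular, we have $f(0)>0$. Moreover, $M_f(\lambda_i,\lambda_j)>0$ whenever $\lambda_i+\lambda_j>0$, using \eqref{eq:Mf-boundary} for the boundary pairs. Hence every term in \eqref{eq:HPskew-spectral} is nonnegative, which gives (Q1). For (Q2), a sum of nonnegative terms vanishes iff $(\lambda_i-\lambda_j)^2|A_{ij}|^2=0$ for every pair with $\lambda_i+\lambda_j>0$. Pairs with $\lambda_i=\lambda_j=0$ are harmless, since $(\lambda_i-\lambda_j)A_{ij}=0$ there anyway. So the condition is equivalent to $(\lambda_i-\lambda_j)A_{ij}=0$ for all $i,j$, that is, $([\rho,A])_{ij}=0$, that is, $[\rho,A]=0$.

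\emph{(Q5).} The spectral formula depends only on the eigenvalues of $\rho$ and on the matrix elements of $A$ in an eigenbasis of $\rho$. Replacing $(\rho,A)$ by $(U\rho U^\dagger,UAU^\dagger)$ keeps the eigenvalues and transports the eigenbasis by $U$, so the matrix elements are unchanged. Equivalently, at the operator level $M_f(L_{U\rho U^\dagger},R_{U\rho U^\dagger})=\mathrm{Ad}_U\, M_f(L_\rho,R_\rho)\,\mathrm{Ad}_U^{-1}$, and $\mathrm{Ad}_U$ is Hilbert--Schmidt unitary.

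\emph{(Q3).} For $P=\ketbra{\phi}{\phi}$, choose a basis with $\lambda_1=1$ and all other $\lambda_j=0$. Only the pairs $(1,j)$ with $j\ge2$ contribute, each with $M_f(1,0)=f(0)$ by \eqref{eq:Mf-boundary}. This gives $f(0)^\theta\sum_{j\ge2}f(0)^{-\theta}|A_{1j}|^2=\sum_{j\ge2}|\langle\phi|A|j\rangle|^2=\langle\phi|A^2|\phi\rangle-\langle\phi|A|\phi\rangle^2=V_P(A)$. The normalization $f(0)^\theta$ in \eqref{eq:HPskew-intro} is chosen exactly so that this works.

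\emph{(Q4).} This is the main step. Since $X=i[\rho,A]$ depends linearly on $\rho$, the map $\rho\mapsto(\rho,\rho,i[\rho,A])$ is affine. Composing the jointly convex map \eqref{eq:HPJconvex} with this affine map therefore gives a convex function of $\rho$ on faithful states. The remaining obstacle is non-faithful states, where $M_f(L_\rho,R_\rho)^{-\theta}$ is not defined and the quantity is defined by continuous extension. I would handle this by approximation. Set $\rho_\epsilon=(1-\epsilon)\rho+\epsilon I/d$, which is faithful, and note that $(p\rho_1+(1-p)\rho_2)_\epsilon=p\rho_{1,\epsilon}+(1-p)\rho_{2,\epsilon}$. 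Apply convexity at level $\epsilon$ and let $\epsilon\to0$.

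The limit requires continuity of each term of \eqref{eq:HPskew-spectral} under this perturbation. For $\rho_\epsilon$ the eigenbasis is fixed and the eigenvalues move continuously, so this reduces to continuity of $(x,y)\mapsto (x-y)^2/M_f(x,y)^\theta$ on $[0,\infty)^2$. This holds away from the origin by regularity. Near the origin, $M_f(x,y)\ge f(0)\max(x,y)$, since $f(t)\ge f(0)$ and by symmetry. Hence the ratio is at most $f(0)^{-\theta}\max(x,y)^{2-\theta}\to0$, consistent with dropping pairs with $\lambda_i=\lambda_j=0$.
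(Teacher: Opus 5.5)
Your proposal is correct and matches the paper's proof essentially step for step. The paper uses the spectral formula for (Q1)--(Q3) and (Q5), and for (Q4) it composes Hiai--Petz joint convexity with the affine map $\rho\mapsto(\rho,\rho,i[\rho,A])$ and then extends to non-faithful states via $\rho_\epsilon$. Your explicit bound $M_f(x,y)\ge f(0)\max(x,y)$ supplies the continuity of the kernel at the origin, which the paper only alludes to.
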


\begin{proof}
Since $f$ is regular, Eq.~\eqref{eq:regular-f} implies that the associated operator mean admits the boundary value \eqref{eq:Mf-boundary}. 
By symmetry of the mean, we also have $M_f(0,x)=xf(0)$. Hence $M_f(x,y)>0$ whenever $x+y>0$. In particular, every denominator appearing in \eqref{eq:HPskew-spectral} is strictly positive for those pairs $(i,j)$ with $(\lambda_i-\lambda_j)^2\ne0$.

Property (Q1) now follows directly from \eqref{eq:HPskew-spectral}. Indeed, $f(0)^\theta>0$, $(\lambda_i-\lambda_j)^2\ge0$, $M_f(\lambda_i,\lambda_j)^\theta>0$ whenever the numerator is nonzero, and $|A_{ij}|^2\ge0$. 

For (Q2), since $M_f(\lambda_i,\lambda_j)>0$ whenever $\lambda_i+\lambda_j>0$, Eq.~\eqref{eq:HPskew-spectral} shows that
\begin{equation}
\mathcal I_\rho^{f,\theta}(A)=0
\end{equation}
if and only if $A_{ij}=0$ for every pair satisfying $\lambda_i\ne\lambda_j$. This is equivalent to
\begin{equation}
[\rho,A]=0,
\end{equation}
and hence (Q2) holds.

To prove (Q3), let $P=\ketbra{\psi}{\psi}$ be a pure state and choose an eigenbasis of $P$ such that $\ket{1}=\ket{\psi}$. Then the eigenvalues of $P$ are $
\lambda_1=1,\ \lambda_j=0\ (j=2,\ldots,n)$. 
In the eigenvalue representation \eqref{eq:HPskew-spectral}, all terms with $i,j\ge2$ vanish because $\lambda_i-\lambda_j=0$. Hence only the pairs $(1,j)$ with $j\ge2$ contribute. Using the boundary relation \eqref{eq:Mf-boundary},
\begin{equation}
M_f(1,0)=f(0),
\end{equation}
we obtain
\begin{equation}
\mathcal I_P^{f,\theta}(A)=f(0)^\theta\sum_{j=2}^n\frac{|A_{1j}|^2}{M_f(1,0)^\theta}=\sum_{j=2}^n|A_{1j}|^2.
\end{equation}
Since $\{\ket{j}\}_{j=1}^n$ is an orthonormal basis and $A=A^\dagger$,
\begin{equation}
\bra{\psi}A^2\ket{\psi}=\sum_{j=1}^n\bra{\psi}A\ket{j}\bra{j}A\ket{\psi}=\sum_{j=1}^n|A_{1j}|^2.
\end{equation}
Moreover, $|A_{11}|^2=|\bra{\psi}A\ket{\psi}|^2=\bra{\psi}A\ket{\psi}^2$. 
Therefore,
\begin{equation}
\mathcal I_P^{f,\theta}(A)=\sum_{j=2}^n|A_{1j}|^2=\bra{\psi}A^2\ket{\psi}-\bra{\psi}A\ket{\psi}^2=V_P(A).
\label{eq:pure-normalization}
\end{equation}
Thus the pure-state normalization (Q3) holds.

For (Q4), we use the joint convexity theorem of Hiai and Petz \cite[Theorem~7]{HiaiPetz2012}; see also \cite[Theorem~2.1]{HiaiPetz2013}.
For completeness, we briefly recall the argument.
Set
\begin{equation}
J_{P,Q}^f:=M_f(L_P,R_Q).
\end{equation}
The joint concavity of Kubo--Ando means implies that $(P,Q)\mapsto J_{P,Q}^f$ is jointly concave.
For $0<\theta\le1$, the quadratic form
\begin{equation}
(Y,X)\longmapsto
\langle X,Y^{-\theta}(X)\rangle_{\mathrm{HS}}
\end{equation}
is jointly convex and decreasing in $Y$ with respect to the operator order.
Hence
\begin{equation}
(P,Q,X)\longmapsto
I_f^\theta(P,Q,X)
=
\langle X,(J_{P,Q}^f)^{-\theta}(X)\rangle_{\mathrm{HS}}
\end{equation}
is jointly convex.
Since, for fixed $A$, the map
\begin{equation}
\rho\longmapsto
\bigl(\rho,\rho,i[\rho,A]\bigr)
\end{equation}
is affine, it follows that $\rho\mapsto\mathcal I_\rho^{f,\theta}(A)$ is convex for faithful states.
The extension to non-faithful states follows by the regularization
$\rho_\varepsilon=(1-\varepsilon)\rho+\varepsilon I/n$ and continuity.
The latter may also be seen directly from the continuous kernel representation introduced in \eqref{eq:HP-kernel}.

Finally, (Q5) follows directly from \eqref{eq:HPskew-spectral}. Under the simultaneous unitary transformation
\begin{equation}
\rho\mapsto U\rho U^\dagger,\quad A\mapsto UAU^\dagger,
\end{equation}
the eigenvalues of $\rho$ are unchanged, and the matrix elements of $UAU^\dagger$ in the transformed eigenbasis $\{U\ket{i}\}$ coincide with those of $A$ in $\{\ket{i}\}$. Hence
\begin{equation}
\mathcal I_{U\rho U^\dagger}^{f,\theta}(UAU^\dagger)=\mathcal I_\rho^{f,\theta}(A),
\label{eq:unitary-covariance}
\end{equation}
which proves (Q5).
\end{proof}

\subsection{The power-commutator family within the Hiai--Petz class}
\label{sec:power-path}

We now return to the power-commutator family \eqref{eq:Ks-intro} and show, as a first concrete application of the Hiai--Petz construction, that its intermediate regime belongs to the Hiai--Petz class. To this end, we recall the Stolarsky mean \cite{Stolarsky1975}. 
For $1/2\le s<1$, consider the one-parameter family
\begin{equation}
E_{s,1}(x,y):=\left(\frac{s(x-y)}{x^s-y^s}\right)^{1/(1-s)},\quad x,y>0,\quad x\ne y,
\label{eq:Stolarsky}
\end{equation}
with the continuous extension $E_{s,1}(x,x)=x$. This is a particular case of the two-parameter family of Stolarsky means.

The homogeneity of $E_{s,1}$ allows us to write
\begin{equation}
E_{s,1}(x,y)=y f_s(x/y),
\label{eq:Stolarsky-representation}
\end{equation}
where
\begin{equation}
f_s(t):=\left(\frac{s(t-1)}{t^s-1}\right)^{1/(1-s)},
\quad t>0,
\label{eq:fs}
\end{equation}
with the continuous extension $f_s(1)=1$. The following lemma records the properties of $f_s$ that are essential for the Hiai--Petz realization.

\begin{lemma}
\label{lem:fs-standard-regular}
For every $1/2\le s<1$, the function $f_s$ in \eqref{eq:fs} is a regular standard operator monotone function. More precisely,
\begin{equation}
f_s(1)=1,
\quad
f_s(t)=t f_s(t^{-1}),
\label{eq:fs-standard}
\end{equation}
and
\begin{equation}
f_s(0)=s^{1/(1-s)}>0.
\label{eq:fs0}
\end{equation}
Moreover, $f_{1/2}=f_{\mathrm{WY}}$.
\end{lemma}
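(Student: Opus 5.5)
The algebraic properties are direct computations from \eqref{eq:fs}. For $f_s(1)=1$, use $\lim_{t\to1}(t^s-1)/(t-1)=s$, so the base $s(t-1)/(t^s-1)\to1$. For the symmetry, compute
\[
f_s(t^{-1})=\left(\frac{s(t^{-1}-1)}{t^{-s}-1}\right)^{1/(1-s)}
=\left(\frac{s(1-t)t^{-1}}{(1-t^s)t^{-s}}\right)^{1/(1-s)}
=t^{-1}f_s(t),
\]
since $(t^{s-1})^{1/(1-s)}=t^{-1}$; this gives $f_s(t)=tf_s(t^{-1})$. Letting $t\downarrow0$ in \eqref{eq:fs} gives base $s(-1)/(-1)=s$, hence $f_s(0)=s^{1/(1-s)}>0$. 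At $s=1/2$, the base is $\tfrac12(t-1)/(\sqrt t-1)=(1+\sqrt t)/2$, and squaring it yields $f_{\mathrm{WY}}$.

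The substantive part is operator monotonicity. My plan is to invoke the known result that Stolarsky means $E_{r,q}$ are operator monotone, in the Kubo--Ando sense, on an explicit parameter region. This is the result of Nagisa--Ue--Wada and Kubo--Ando-type classifications, which show that $\bigl(\frac{q(t^r-1)}{r(t^q-1)}\bigr)^{1/(r-q)}$ is operator monotone for $-2\le r,q\le 2$ with suitable restrictions. Our $f_s$ is the case $(r,q)=(1,s)$, which lies in that range. If I want a self-contained argument instead, I would use the integral representation
\[
\frac{t^s-1}{s(t-1)}=\int_0^1 \bigl((1-u)+ut\bigr)^{s-1}\,du .
\]
This expresses $1/(\text{base})$ as a power mean of the form $\int g(u,t)^{s-1}du$, with $-1/2\le s-1<0$. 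I would then try to show that
\[
f_s(t)=\Bigl(\int_0^1\bigl((1-u)+ut\bigr)^{-(1-s)}du\Bigr)^{-1/(1-s)}
\]
is operator monotone. Each map $t\mapsto(1-u)+ut$ is an operator monotone affine map, and $p=1-s\in(0,1/2]$.

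The main obstacle is exactly this last step. A generalized power-mean of the form $\bigl(\int x_u^{-p}\,d\mu\bigr)^{-1/p}$ is operator monotone in the scalar-commutative functional-calculus sense only when checked via Löwner's criterion. The cleanest route is to verify that $f_s$ extends analytically to the upper half-plane and maps it into itself. For $\operatorname{Im} t>0$, each $(1-u)+ut$ lies in the upper half-plane, so $((1-u)+ut)^{-p}$ has argument in $(-p\pi,0)$. Its average has argument in the same sector by convexity of sectors. The $-1/p$ power then has argument in $(0,\pi)$, as required, because $p\le1$. Together with positivity on $(0,\infty)$, Löwner's theorem gives operator monotonicity, so I would present this Pick-function argument as the core and cite the Stolarsky-mean literature as a cross-check.
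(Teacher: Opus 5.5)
Your proposal is correct. The elementary identities ($f_s(1)=1$, $f_s(t^{-1})=t^{-1}f_s(t)$, $f_s(0)=s^{1/(1-s)}$, $f_{1/2}=f_{\mathrm{WY}}$) are the same direct computations the paper uses. The difference is in operator monotonicity.

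The paper simply invokes Nakamura's theorem as stated in \cite[Theorem~4.46]{HiaiPetz2014}. That result covers $\bigl(p(t-1)/(t^p-1)\bigr)^{1/(1-p)}$ for all $-2\le p\le 2$. You instead prove operator monotonicity directly from the representation $f_s(t)=\bigl(\int_0^1((1-u)+ut)^{-(1-s)}\,du\bigr)^{-1/(1-s)}$, using L\"owner's theorem.

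Your argument works. For $\operatorname{Im}t>0$ and $0<u\le1$, the integrand lies in the open sector $\{-(1-s)\pi<\arg z<0\}$. This sector is convex because $1-s\le1$, which is where that hypothesis really enters. Hence the integral stays in the sector, and its $-1/(1-s)$ power lies in the upper half-plane.

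To make the appeal to L\"owner's theorem complete, you should state explicitly that the integral is analytic on $\mathbb{C}\setminus(-\infty,0]$ and never takes values in $(-\infty,0]$ there. It is positive on $(0,\infty)$ and lies in conjugate sectors on the two half-planes. Therefore the principal-branch power is an analytic continuation of $f_s$ across $(0,\infty)$.

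The two routes buy different things:
\begin{itemize}
\item The citation route is shorter and gives the full Nakamura range.
\item Your route is self-contained apart from L\"owner's theorem, and it exhibits $E_{s,1}$ as a power mean of the weighted arithmetic means $(1-u)x+uy$.
\item The same sector argument works exactly as long as the relevant sector is convex, i.e.\ for $0\le s\le 2$, $s\neq 1$. This contains the required range $1/2\le s<1$.
\end{itemize}

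Your preliminary reference to Nagisa--Ue--Wada ``with suitable restrictions'' is too imprecise to stand as a proof by itself. Since the Pick-function argument carries the weight, nothing hinges on it.
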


\begin{proof}
Hiai and Petz \cite[Theorem~4.46]{HiaiPetz2014}, following the original result of Nakamura \cite{Nakamura1989}, show that
\begin{equation}
f_p(t)=\left(\frac{p(t-1)}{t^p-1}\right)^{1/(1-p)}
\end{equation}
is operator monotone for $-2\le p\le2$, with the limiting cases at $p=0,1$ understood continuously. Hence $f_s$ is operator monotone throughout the required range $1/2\le s<1$.
The normalization and symmetry relations in \eqref{eq:fs-standard} follow by continuity at $t=1$ and by direct calculation. Furthermore,
\begin{equation}
f_s(0)
=
\lim_{t\downarrow0}\left(\frac{s(t-1)}{t^s-1}\right)^{1/(1-s)}
=
s^{1/(1-s)}>0,
\end{equation}
which proves regularity. Finally,
\begin{equation}
f_{1/2}(t)
=
\left(\frac{(t-1)/2}{\sqrt{t}-1}\right)^2
=
\frac{(1+\sqrt{t})^2}{4}
=
f_{\mathrm{WY}}(t).
\end{equation}
\end{proof}

By \eqref{eq:Stolarsky-representation}, the operator mean associated with $f_s$ has scalar restriction
\begin{equation}
M_{f_s}(x,y)=y f_s(x/y)=E_{s,1}(x,y).
\label{eq:Mfs-Stolarsky}
\end{equation}
We now set
\begin{equation}
\theta_s:=2(1-s),
\label{eq:theta-s}
\end{equation}
so that $0<\theta_s\le1$ for $1/2\le s<1$. By Lemma~\ref{lem:fs-standard-regular}, $f_s$ is admissible in the Hiai--Petz framework throughout this range. The following proposition identifies the resulting Hiai--Petz skew information with the power-commutator family.
\begin{proposition}
\label{prop:power-HP}
For every $1/2\le s<1$, the power-commutator quantity \eqref{eq:Ks-intro} is a normalized Hiai--Petz skew information. More precisely,
\begin{equation}
K_s(\rho,A)=\mathcal I_\rho^{f_s,\theta_s}(A).
\label{eq:power-HP-identity}
\end{equation}
\end{proposition}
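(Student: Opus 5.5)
The plan is to compare the two spectral representations term by term. $K_s$ is given by \eqref{eq:Ks-intro}, and the Hiai--Petz skew information $\mathcal I_\rho^{f_s,\theta_s}(A)$ is given by \eqref{eq:HPskew-spectral}. Both are sums over pairs $i<j$ of a scalar kernel times $|A_{ij}|^2$, in the same eigenbasis of $\rho$. It therefore suffices to prove the scalar identity
\begin{equation*}
f_s(0)^{\theta_s}\,\frac{(x-y)^2}{M_{f_s}(x,y)^{\theta_s}}=(x^s-y^s)^2
\end{equation*}
for all $x,y\ge0$ with $x+y>0$. We also need to check that pairs with $\lambda_i=\lambda_j=0$ contribute zero on both sides.

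First take $x,y>0$ with $x\neq y$. By \eqref{eq:Mfs-Stolarsky}, $M_{f_s}(x,y)=E_{s,1}(x,y)$. The exponents combine as $\theta_s/(1-s)=2$, so
\begin{equation*}
E_{s,1}(x,y)^{\theta_s}=\left(\frac{s(x-y)}{x^s-y^s}\right)^{2}.
\end{equation*}
By \eqref{eq:fs0}, $f_s(0)^{\theta_s}=s^{2}$. Substituting these gives $s^2(x-y)^2\,(x^s-y^s)^2/(s^2(x-y)^2)=(x^s-y^s)^2$. When $x=y>0$ both sides vanish, since $M_{f_s}(x,x)=x>0$.

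Next consider the boundary, where one eigenvalue is zero, say $y=0<x$. By \eqref{eq:Mf-boundary}, $M_{f_s}(x,0)=xf_s(0)$. The left side is then $f_s(0)^{\theta_s}x^2/(x f_s(0))^{\theta_s}=x^{2-\theta_s}=x^{2s}$, which equals $(x^s-0)^2$. The same value also follows by continuity. Pairs with $\lambda_i=\lambda_j=0$ are excluded in \eqref{eq:HPskew-spectral} and contribute zero to $K_s$.

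Summing the scalar identity against $|A_{ij}|^2$ gives \eqref{eq:power-HP-identity}. The only delicate points are the exponent bookkeeping, namely the choice $\theta_s=2(1-s)$, which makes $(1/(1-s))\theta_s=2$, and the treatment of non-faithful states. Lemma~\ref{lem:fs-standard-regular} guarantees that $f_s$ is admissible, so $\mathcal I_\rho^{f_s,\theta_s}$ is well defined.
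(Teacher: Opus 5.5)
Your proposal is correct and follows essentially the same route as the paper: both reduce the claim to the scalar kernel identity $f_s(0)^{\theta_s}(x-y)^2/M_{f_s}(x,y)^{\theta_s}=(x^s-y^s)^2$, using $\theta_s/(1-s)=2$ and $f_s(0)^{\theta_s}=s^2$, and then sum it against $|A_{ij}|^2$. Your explicit check of the boundary case $y=0$ via $M_{f_s}(x,0)=xf_s(0)$, which gives $x^{2-\theta_s}=x^{2s}$, is a slightly more hands-on version of the paper's appeal to continuous extension.
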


\begin{proof}
By the definition of $E_{s,1}$ and \eqref{eq:theta-s}, we have
\begin{equation}
M_{f_s}(x,y)^{\theta_s}
=
\left(\frac{s(x-y)}{x^s-y^s}\right)^2
=
\frac{s^2(x-y)^2}{(x^s-y^s)^2}.
\label{eq:Mfs-theta}
\end{equation}
On the other hand, \eqref{eq:fs0} gives
\begin{equation}
f_s(0)^{\theta_s}
=
\left(s^{1/(1-s)}\right)^{2(1-s)}
=
s^2.
\label{eq:fs0-theta}
\end{equation}
It follows that
\begin{equation}
f_s(0)^{\theta_s}
\frac{(x-y)^2}{M_{f_s}(x,y)^{\theta_s}}
=
(x^s-y^s)^2,
\label{eq:power-kernel-identity}
\end{equation}
where the identity is understood by continuous extension when $x=y$ or one of $x,y$ vanishes.

Substituting \eqref{eq:power-kernel-identity} into the eigenvalue representation \eqref{eq:HPskew-spectral}, we obtain
\begin{equation}
\mathcal I_\rho^{f_s,\theta_s}(A)
=
\sum_{i<j}(\lambda_i^s-\lambda_j^s)^2|A_{ij}|^2.
\label{eq:HP-power-spectral}
\end{equation}
Here the pairs with $\lambda_i=\lambda_j=0$ may now be included, since their contributions vanish.
Since
\begin{equation}
[\rho^s,A]_{ij}
=
(\lambda_i^s-\lambda_j^s)A_{ij},
\end{equation}
and $A$ is self-adjoint, we also have
\begin{equation}
\frac12\|[\rho^s,A]\|_{\mathrm{HS}}^2
=
\sum_{i<j}(\lambda_i^s-\lambda_j^s)^2|A_{ij}|^2.
\end{equation}
This proves \eqref{eq:power-HP-identity}.
\end{proof}
Combining Proposition~\ref{prop:power-HP} with Proposition~\ref{prop:HP-Q1-Q5}, we conclude that $K_s(\rho,A)$ satisfies (Q1)--(Q5) for every $1/2\le s<1$. Together with the elementary case $s=1$, this establishes the same conclusion for the entire range $1/2\le s\le1$.

\begin{remark}
The power-commutator family \eqref{eq:Ks-intro} is related to a broader class of skew-information-type trace functionals introduced by Fujii \cite{Fujii2005}:
\begin{equation}
S_{f,g}(\rho,A)
=
\Tr f(\rho)A g(\rho)A
-
\Tr f(\rho)g(\rho)A^2.
\end{equation}
For $f=g=x^s$, this gives
\begin{equation}
-S_{x^s,x^s}(\rho,A)
=
-\frac12\Tr[\rho^s,A]^2
=
K_s(\rho,A).
\end{equation}
Thus, from an algebraic point of view, the quantities $K_s$ are already contained in Fujii's broader two-function framework. In general, however, this framework does not impose the full set of positivity and convexity requirements considered here for intrinsic quantum uncertainty, and generic choices of $f$ and $g$ need not satisfy (Q1)--(Q5). The significance of Proposition~\ref{prop:power-HP} is that the particular family $K_s$, for $1/2\le s<1$, admits a realization within the Hiai--Petz framework, thereby establishing in particular the nontrivial convexity property (Q4), together with the other requirements (Q1)--(Q5).
\end{remark}

The realization in Proposition~\ref{prop:power-HP} is also structurally suggestive. At $s=1/2$, one has $\theta_s=1$ and $f_s=f_{\mathrm{WY}}$, whereas $\theta_s\downarrow0$ as $s\uparrow1$, with $K_s$ converging to $K_1$. Thus the power family already exhibits a path inside the Hiai--Petz framework that approaches $K_1$ at the continuous boundary $\theta=0$. This observation motivates the more general interpolation viewpoint developed in the next subsection.

\subsection{Interpolation structure: a common endpoint at $K_1$}
\label{sec:interpolation-structure}

The Hiai--Petz realization of the power family in Proposition~\ref{prop:power-HP} suggests a broader structural interpretation of the parameter $\theta$. We now show that, for each fixed regular standard operator monotone function $f$, the Hiai--Petz family provides, at the level of pairwise spectral coefficients, a geometric interpolation between Hansen's metric-adjusted skew information $I^f$ and the common endpoint $K_1$.

We now compare $K_1$, Hansen's metric-adjusted skew information $I^f$, and the Hiai--Petz skew information, whose spectral representations are given in Eqs.~\eqref{eq:K1-intro}, \eqref{eq:MASI-spectral}, and \eqref{eq:HPskew-spectral}, respectively.
We shall call the functions determining the coefficients of $|A_{ij}|^2$ in these eigenvalue representations their \emph{pairwise kernels}. Thus, for $x,y\ge0$, we set
\begin{equation}
k_{K_1}(x,y):=(x-y)^2,
\quad
k_f(x,y):=f(0)\frac{(x-y)^2}{M_f(x,y)},
\quad
k_{f,\theta}(x,y):=f(0)^\theta\frac{(x-y)^2}{M_f(x,y)^\theta},
\label{eq:HP-kernel}
\end{equation}
where the values at $(0,0)$ are understood by continuity.

\begin{proposition}[Kernel interpolation]
\label{prop:kernel-interpolation}
For every regular standard operator monotone function $f$ and every $0<\theta\le1$,
\begin{equation}
k_{f,\theta}(x,y)
=
k_{K_1}(x,y)^{1-\theta}k_f(x,y)^\theta
\label{eq:HP-geometric-interpolation}
\end{equation}
for $x\ne y$, with continuous extension to the diagonal. Consequently,
\begin{equation}
\mathcal I_\rho^{f,1}(A)=I_\rho^f(A),
\quad
\lim_{\theta\downarrow0}\mathcal I_\rho^{f,\theta}(A)=K_1(\rho,A).
\label{eq:HP-endpoints}
\end{equation}
Thus, for fixed $f$, the Hiai--Petz parameter gives a continuous pairwise geometric interpolation between $K_1$ and the metric-adjusted skew information determined by $f$.
\end{proposition}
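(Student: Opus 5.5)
The plan is to verify \eqref{eq:HP-geometric-interpolation} by direct algebra on the kernels in \eqref{eq:HP-kernel}, and then obtain \eqref{eq:HP-endpoints} from the spectral representation \eqref{eq:HPskew-spectral}. Since the sum there is finite, any limit in $\theta$ passes through it term by term.

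For $x\ne y$ with $x,y\ge0$, we have $x+y>0$. The boundary relation \eqref{eq:Mf-boundary} and regularity then give $M_f(x,y)>0$ (as in the proof of Proposition~\ref{prop:HP-Q1-Q5}), and $(x-y)^2>0$. All quantities are therefore strictly positive, so real powers are unambiguous and
\begin{equation}
k_{K_1}(x,y)^{1-\theta}k_f(x,y)^\theta
=(x-y)^{2(1-\theta)}\,f(0)^\theta\frac{(x-y)^{2\theta}}{M_f(x,y)^\theta}
=k_{f,\theta}(x,y).
\end{equation}
On the diagonal, all three kernels are continuous in $(x,y)$ away from the origin and tend to $0$ there. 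Since $0<\theta\le1$, the right-hand side of the interpolation identity, read as $0^{1-\theta}\cdot 0^\theta=0$ (using the convention $0^0=1$ when $\theta=1$), also vanishes and matches the continuous extension. At the origin, continuity holds because $M_f(x,y)\ge \min(x,y)$ is not enough. Instead I would use $M_f(x,y)\ge f(0)\max(x,y)$, which holds because $f$ is increasing and $M_f$ is symmetric. This gives $k_{f,\theta}\le (x-y)^2/\max(x,y)^\theta\le |x-y|^{2-\theta}\to0$.

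For the endpoints, $\theta=1$ makes \eqref{eq:HPskew-spectral} literally equal to \eqref{eq:MASI-spectral}, which yields $\mathcal I_\rho^{f,1}=I^f_\rho$. For $\theta\downarrow0$, fix $\rho$ and $A$ and look at each pair $(i,j)$. If $\lambda_i=\lambda_j$, both sides contribute zero. Otherwise $k_f(\lambda_i,\lambda_j)>0$ is a fixed positive number, so $k_f^\theta\to1$ and hence $k_{f,\theta}(\lambda_i,\lambda_j)\to(\lambda_i-\lambda_j)^2=k_{K_1}(\lambda_i,\lambda_j)$. Summing the finitely many terms weighted by $|A_{ij}|^2$ gives $K_1(\rho,A)$ by \eqref{eq:K1-intro}.

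The only delicate point is the continuity bookkeeping at degenerate pairs and at the origin, needed so the identity extends and the non-faithful convention is respected. The bound $M_f(x,y)\ge f(0)\max(x,y)$ settles this, and the rest is routine.
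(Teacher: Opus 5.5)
Your proposal is correct and follows the same route as the paper, which states only that the identity is immediate from the definitions in \eqref{eq:HP-kernel}. Your direct algebra, the termwise passage to $\theta\downarrow0$ in the finite spectral sum, and the bound $M_f(x,y)\ge f(0)\max(x,y)$ (which gives $k_{f,\theta}(x,y)\le|x-y|^{2-\theta}$ near the origin) make explicit the continuity bookkeeping the paper leaves implicit, though you should rephrase the sentence claiming continuity at the origin holds ``because $M_f(x,y)\ge\min(x,y)$ is not enough,'' since it is the $\max$ bound that does the work.
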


\begin{proof}
The claim follows immediately from the definitions in \eqref{eq:HP-kernel}. 
The case $\theta=0$ is understood by continuous extension and lies outside the parameter range of the Hiai--Petz convexity theorem.
\end{proof}

\subsubsection{Position of the power-commutator family}
Proposition~\ref{prop:kernel-interpolation} reveals a simple global structure of the Hiai--Petz family. At $\theta=1$, one recovers the entire Hansen class, whereas, for each fixed $f$, decreasing $\theta$ continuously deforms the corresponding metric-adjusted skew information toward $K_1$. All such fixed-$f$ paths meet at the common endpoint $K_1$ as $\theta\downarrow0$. Thus $K_1$ serves as a universal endpoint of the Hiai--Petz interpolation. This interpolation is geometric at the level of pairwise spectral kernels; in general, however, the full functionals themselves are not related by a geometric mean.

The power-commutator family $K_s$ obtained above has a particularly simple position within this global structure. By Proposition~\ref{prop:power-HP},
\begin{equation}
K_s(\rho,A)
=
\mathcal I_\rho^{f_s,\theta_s}(A),
\quad
\theta_s=2(1-s),
\quad
\frac12\le s<1,
\label{eq:Ks-HP-position}
\end{equation}
where $f_s$ is the Stolarsky representing function introduced above. Combining this with Proposition~\ref{prop:kernel-interpolation} gives
\begin{equation}
k_{K_s}(x,y)
=
k_{K_1}(x,y)^{2s-1}
k_{f_s}(x,y)^{2(1-s)}.
\label{eq:Ks-kernel-interpolation}
\end{equation}
Hence, for each fixed $1/2<s<1$, $K_s$ lies on the Hiai--Petz interpolation path connecting the metric-adjusted skew information $I^{f_s}$ to the common endpoint $K_1$.

As $s$ varies, not only the interpolation parameter $\theta_s=2(1-s)$ but also the representing function $f_s$ changes. Therefore, the power family does not follow a single fixed-$f$ Hiai--Petz interpolation path. Rather, it traces a distinguished curve through the Hiai--Petz family: at $s=1/2$ one has $f_{1/2}=f_{\mathrm{WY}}$ and
\begin{equation}
K_{1/2}(\rho,A)=I_\rho^{\mathrm{WY}}(A),
\end{equation}
whereas
\begin{equation}
\lim_{s\uparrow1}K_s(\rho,A)=K_1(\rho,A).
\end{equation}
Thus the power-commutator family provides a continuous path from the Wigner--Yanase skew information to $K_1$, while simultaneously selecting, for each intermediate $s$, a distinguished point on a different fixed-$f_s$ Hiai--Petz interpolation path.

\subsubsection{Interpolation to the SLD quantum Fisher information}

A particularly important fixed-$f$ path is obtained from the SLD function
\begin{equation}
f_{\mathrm{SLD}}(t)=\frac{1+t}{2}.
\end{equation}
In contrast to the power family, the SLD interpolation is a genuine fixed-$f$ path: the representing function $f_{\mathrm{SLD}}$ is kept fixed while only $\theta$ varies.
As shown in Sec.~\ref{sec:single-observable-uncertainty}, one quarter of the SLD quantum Fisher information is the maximal intrinsic quantum uncertainty among functionals satisfying (Q3) and (Q4). The corresponding Hiai--Petz path therefore connects the common endpoint $K_1$ directly to this maximal element of the Hansen class.

Since
\begin{equation}
M_{f_{\mathrm{SLD}}}(x,y)=\frac{x+y}{2},
\quad
f_{\mathrm{SLD}}(0)=\frac12,
\end{equation}
the interpolation takes the particularly simple form
\begin{equation}
\mathcal I_\rho^{\mathrm{SLD},\theta}(A)
:=
\mathcal I_\rho^{f_{\mathrm{SLD}},\theta}(A)
=
\sum_{\substack{i<j\\ \lambda_i+\lambda_j>0}}
\frac{(\lambda_i-\lambda_j)^2}
{(\lambda_i+\lambda_j)^\theta}
|A_{ij}|^2.
\label{eq:HP-SLD-family}
\end{equation}
Its endpoints are
\begin{equation}
\lim_{\theta\downarrow0}
\mathcal I_\rho^{\mathrm{SLD},\theta}(A)
=
K_1(\rho,A),
\quad
\mathcal I_\rho^{\mathrm{SLD},1}(A)
=
\frac14F_Q^{\mathrm{SLD}}(\rho,A).
\label{eq:HP-SLD-endpoints}
\end{equation}
Thus the Hiai--Petz family provides a continuous interpolation between the Hilbert--Schmidt commutator $K_1$ and the strongest single-observable quantum uncertainty allowed by the present axioms.

\subsection{Deformed additivity}
\label{sec:deformed-additivity}

The definition of intrinsic quantum uncertainty used here does not require a composition axiom. Nevertheless, the Hiai--Petz parameter produces a controlled deformation of the familiar composition properties of metric-adjusted skew information.

As mentioned above, the original Wigner--Yanase skew information and Hansen's metric-adjusted skew informations are additive for statistically independent systems. In general, however, this additivity is no longer retained by the normalized Hiai--Petz skew information. A simple illustration is provided by the power-commutator family \eqref{eq:Ks-intro}. For $1/2<s\le1$, it is generally nonadditive. Indeed, for product states $\rho_1\otimes\rho_2$ and additive observables $A_1\otimes I+I\otimes A_2$, we have
\begin{equation}
[(\rho_1\otimes\rho_2)^s,A_1\otimes I+I\otimes A_2]
=
[\rho_1^s,A_1]\otimes\rho_2^s
+
\rho_1^s\otimes[\rho_2^s,A_2].
\end{equation}
The cross term in the squared Hilbert--Schmidt norm vanishes, since
\begin{equation}
\Tr\bigl([\rho_1^s,A_1]^\dagger\rho_1^s\bigr)=0,
\quad
\Tr\bigl((\rho_2^s)^\dagger[\rho_2^s,A_2]\bigr)=0.
\end{equation}
Consequently,
\begin{equation}
K_s(\rho_1\otimes\rho_2,A_1\otimes I+I\otimes A_2)
=
\Tr(\rho_2^{2s})K_s(\rho_1,A_1)
+
\Tr(\rho_1^{2s})K_s(\rho_2,A_2).
\label{eq:Ks-deformed-additivity}
\end{equation}
For $s=1/2$, the factors $\Tr(\rho_1^{2s})$ and $\Tr(\rho_2^{2s})$ are both equal to one, and \eqref{eq:Ks-deformed-additivity} reduces to the ordinary additivity of the Wigner--Yanase skew information. For $1/2<s\le1$, however, these factors are generally different from one for mixed states, and ordinary additivity fails.

Nevertheless, relation \eqref{eq:Ks-deformed-additivity} suggests that additivity is not simply lost, but is instead replaced by a natural deformed composition law. In fact, the entire family of normalized Hiai--Petz skew informations satisfies the following power-trace-weighted additivity relation.

\begin{proposition}
\label{prop:deformed-additivity}
Let $f$ be a regular standard operator monotone function and let $0<\theta\le1$. For density operators $\rho_1$ and $\rho_2$ and observables $A_1$ and $A_2$,
\begin{equation}
\mathcal I_{\rho_1\otimes\rho_2}^{f,\theta}
\left(A_1\otimes I+I\otimes A_2\right)
=
\Tr\left(\rho_2^{2-\theta}\right)
\mathcal I_{\rho_1}^{f,\theta}(A_1)
+
\Tr\left(\rho_1^{2-\theta}\right)
\mathcal I_{\rho_2}^{f,\theta}(A_2).
\label{eq:deformed-additivity}
\end{equation}
Equivalently, using the Tsallis entropy \eqref{eq:Tsallis},
\begin{align}
&\mathcal I_{\rho_1\otimes\rho_2}^{f,\theta}
\left(A_1\otimes I+I\otimes A_2\right)
\nonumber\\
&\quad=
\mathcal I_{\rho_1}^{f,\theta}(A_1)
+
\mathcal I_{\rho_2}^{f,\theta}(A_2)
-(1-\theta)
\left[
S_{2-\theta}(\rho_2)\mathcal I_{\rho_1}^{f,\theta}(A_1)
+
S_{2-\theta}(\rho_1)\mathcal I_{\rho_2}^{f,\theta}(A_2)
\right].
\label{eq:Tsallis-additivity-deficit}
\end{align}
In particular, at $\theta=1$, corresponding to Hansen's metric-adjusted skew information, this reduces to the ordinary additivity law \eqref{eq:Q6-additivity}.
\end{proposition}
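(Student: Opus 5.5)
Work in the product eigenbasis. Write $\rho_1=\sum_i\lambda_i\ketbra{i}{i}$ and $\rho_2=\sum_a\mu_a\ketbra{a}{a}$, so that $\rho_1\otimes\rho_2$ has eigenvectors $\ket{i}\ket{a}$ with eigenvalues $\lambda_i\mu_a$. The matrix elements of $A=A_1\otimes I+I\otimes A_2$ are
\begin{equation}
A_{(ia),(jb)}=(A_1)_{ij}\delta_{ab}+\delta_{ij}(A_2)_{ab}.
\end{equation}
The spectral form \eqref{eq:HPskew-spectral} sums $k_{f,\theta}$ over unordered pairs. It is cleaner to rewrite it as $\frac12\sum_{(ia)\neq(jb)}k_{f,\theta}(\lambda_i\mu_a,\lambda_j\mu_b)|A_{(ia),(jb)}|^2$, using the symmetry of $k_{f,\theta}$ and the fact that diagonal terms vanish. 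Every off-diagonal pair has either $a=b,\ i\ne j$ or $i=j,\ a\ne b$; the other pairs give zero matrix element. The two nonzero classes are disjoint, so no cross terms between $A_1$ and $A_2$ arise. This is simpler than the Hilbert--Schmidt computation used for $K_s$.

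The key step is the homogeneity of the kernel. Since $M_f(cx,cy)=cM_f(x,y)$ for $c\ge0$, with the boundary values \eqref{eq:Mf-boundary} handled by continuity, we get
\begin{equation}
k_{f,\theta}(cx,cy)=c^{2-\theta}k_{f,\theta}(x,y).
\end{equation}
For $c>0$ this is immediate. For $c=0$ both sides vanish, using the continuity convention at $(0,0)$. Hence the class $a=b$ contributes
\begin{equation}
\sum_a\mu_a^{2-\theta}\cdot\tfrac12\sum_{i\ne j}k_{f,\theta}(\lambda_i,\lambda_j)|(A_1)_{ij}|^2=\Tr(\rho_2^{2-\theta})\,\mathcal I^{f,\theta}_{\rho_1}(A_1),
\end{equation}
and symmetrically the class $i=j$ gives $\Tr(\rho_1^{2-\theta})\mathcal I^{f,\theta}_{\rho_2}(A_2)$. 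Here $\Tr\rho^{2-\theta}$ is the sum over all eigenvalues, with $0^{2-\theta}=0$ since $2-\theta\ge1$. This proves \eqref{eq:deformed-additivity}.

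The only delicate point is non-faithful states, where eigenvalues may vanish. I would handle it in one of two ways. One option is to prove the identity for faithful $\rho_1,\rho_2$ and pass to the limit by continuity of both sides. The other is to note that the kernel representation \eqref{eq:HP-kernel} is continuous on $[0,\infty)^2$, so the computation above is already valid with the convention that pairs with both eigenvalues zero contribute nothing.

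For the Tsallis form, substitute $\Tr\rho^{2-\theta}=1-(1-\theta)S_{2-\theta}(\rho)$, which follows from \eqref{eq:Tsallis} with $q=2-\theta$ because $q-1=1-\theta$. At $\theta=1$ we have $\Tr\rho^{1}=1$, which gives \eqref{eq:Q6-additivity}. The main obstacle is only the bookkeeping of index classes and boundary cases; no inequality is required.
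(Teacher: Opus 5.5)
Your proposal is correct and follows essentially the same route as the paper: you work in the product eigenbasis, use the disjoint supports of the $A_1\otimes I$ and $I\otimes A_2$ matrix elements to rule out cross terms, and use the degree-$(2-\theta)$ homogeneity of the kernel to produce the factors $\Tr(\rho_2^{2-\theta})$ and $\Tr(\rho_1^{2-\theta})$. Your explicit treatment of zero eigenvalues, and the Tsallis rewriting via $\Tr\rho^{2-\theta}=1-(1-\theta)S_{2-\theta}(\rho)$, fill in details that the paper leaves implicit.
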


\begin{proof}
Let
\begin{equation}
\rho_1=\sum_i\lambda_i\ketbra{i}{i},
\quad
\rho_2=\sum_a\mu_a\ketbra{a}{a}.
\end{equation}
Then $\rho_1\otimes\rho_2$ has eigenvalues $\lambda_i\mu_a$. The matrix elements of $A_1\otimes I+I\otimes A_2$ between the product eigenvectors $\ket{i,a}$ and $\ket{j,b}$ are
\begin{equation}
(A_1)_{ij}\delta_{ab}
+
\delta_{ij}(A_2)_{ab}.
\end{equation}
Hence the two contributions are supported on disjoint sets of index pairs unless both $i=j$ and $a=b$, in which case the commutator contribution vanishes. Therefore there are no cross terms.

For the $A_1\otimes I$ contribution, using the homogeneity
\begin{equation}
M_f(tx,ty)=tM_f(x,y),
\end{equation}
we have
\begin{equation}
\frac{(\lambda_i\mu_a-\lambda_j\mu_a)^2}
{M_f(\lambda_i\mu_a,\lambda_j\mu_a)^\theta}
=
\mu_a^{2-\theta}
\frac{(\lambda_i-\lambda_j)^2}
{M_f(\lambda_i,\lambda_j)^\theta}.
\end{equation}
Summing over $a$ yields the factor $\Tr(\rho_2^{2-\theta})$. The $I\otimes A_2$ contribution is analogous and gives the factor $\Tr(\rho_1^{2-\theta})$. 
Substitution into the spectral formula \eqref{eq:HPskew-spectral} proves \eqref{eq:deformed-additivity}.
\end{proof}

\begin{remark}
\label{rem:beyond-MASI}
The deformed composition law in Proposition~\ref{prop:deformed-additivity} reflects the homogeneity of the pairwise spectral kernel. 
Since the operator mean $M_f$ is homogeneous of degree one, \eqref{eq:HP-kernel} implies
\begin{equation}
k_{f,\theta}(tx,ty)
=
t^{2-\theta}k_{f,\theta}(x,y).
\label{eq:HP-kernel-homogeneity}
\end{equation}
Thus the metric-adjusted boundary $\theta=1$ is precisely the case of degree-one homogeneity.

Indeed, if $\rho_2=\sum_a\mu_a\ketbra{a}{a}$, the $A_1\otimes I$ contribution involves the eigenvalue pairs $(\mu_a\lambda_i,\mu_a\lambda_j)$ and therefore acquires the factor $\mu_a^{2-\theta}$. 
Summing over $a$ gives $\Tr(\rho_2^{2-\theta})$, which explains the power-trace factor in \eqref{eq:deformed-additivity}. 
In this sense, the deformed additivity is the tensor-product manifestation of the homogeneity degree $2-\theta$.

In particular, setting $A_2=0$ gives
\begin{equation}
\mathcal I_{\rho_1\otimes\rho_2}^{f,\theta}(A_1\otimes I)
=
\Tr(\rho_2^{2-\theta})
\mathcal I_{\rho_1}^{f,\theta}(A_1).
\label{eq:HP-ancilla-scaling}
\end{equation}
For $0<\theta<1$ and mixed $\rho_2$, the prefactor
$\Tr(\rho_2^{2-\theta})$ is strictly smaller than one, whereas every metric-adjusted skew information is invariant under adjoining an uncorrelated ancillary state. 
Hence an interior Hiai--Petz functional cannot coincide with a Hansen metric-adjusted skew information as a dimension-independent family.
\end{remark}

\section{Optimal single-observable uncertainty relations}
\label{sec:optimal-single-observable}

\subsection{Fixed-state optimization and the classical contribution}
The universal relation \eqref{eq:single-observable-UR},
\begin{equation}
V_\rho(A)\ge Q_\rho(A),
\end{equation}
follows solely from the pure-state normalization (Q3) and convexity (Q4). 
For a fixed mixed state $\rho$, however, the coefficient multiplying $Q_\rho(A)$ need not be optimal. This motivates us to consider the strongest state-dependent refinement of the form
\begin{equation}
V_\rho(A)\ge c(\rho)Q_\rho(A),
\label{eq:state-dependent-UR}
\end{equation}
where $c(\rho)$ depends only on the state. For a fixed state $\rho$, we define the optimal coefficient by
\begin{equation}
c_Q^{\mathrm{opt}}(\rho)
:=
\sup\left\{
c\ge0:
V_\rho(A)\ge cQ_\rho(A)
\text{ for all }A\in\mathcal L_{\rm sa}(\HA)
\right\}.
\label{eq:copt-general}
\end{equation}
Since \eqref{eq:single-observable-UR} always holds, one necessarily has $c_Q^{\mathrm{opt}}(\rho)\ge1$.

In our recent work \cite{YamashitaMayumiKimura2026}, this fixed-state optimization was carried out for several representative intrinsic quantum uncertainties, including the Wigner--Yanase and Wigner--Yanase--Dyson skew informations, the SLD quantum Fisher information, and the power-commutator family $K_s$. 
We further identified a classical contribution to the uncertainty, denoted by $V_\rho^{\mathrm{cl}}(A)$, and established the stronger uncertainty relations
\begin{equation}
V_\rho(A)
\ge
V_\rho^{\mathrm{cl}}(A)
+
c_Q^{\mathrm{opt}}(\rho)Q_\rho(A).
\label{eq:optimal-UR-with-classical}
\end{equation}
Remarkably, the same optimal coefficient $c_Q^{\mathrm{opt}}(\rho)$ appears even after the additional classical contribution is included. 
The purpose of the present section is to unify these optimal single-observable uncertainty relations within the framework of Hiai--Petz skew information.

We first recall the classical contribution to the variance. Let $\rho=\sum_{i=1}^n\lambda_i\ketbra{i}{i}$ be an eigenvalue decomposition of $\rho$, and denote the corresponding orthonormal eigenbasis by $\mathcal B=\{\ket{i}\}_{i=1}^n$. 
For a fixed eigenbasis $\mathcal B$, we define
\begin{equation}
V_{\rho,\mathcal B}^{\mathrm{cl}}(A)
:=
\sum_{i=1}^n
\lambda_i
\left(
\bra{i}A\ket{i}
-
\langle A\rangle_\rho
\right)^2.
\label{eq:classical-variance-basis}
\end{equation}
When $\rho$ has degenerate eigenvalues, this quantity may depend on the choice of eigenbasis within the degenerate eigenspaces. 
We therefore define the classical uncertainty by maximizing over all eigenbases of $\rho$:
\begin{equation}
V_\rho^{\mathrm{cl}}(A)
:=
\max_{\mathcal B\in\mathfrak B(\rho)}
V_{\rho,\mathcal B}^{\mathrm{cl}}(A),
\label{eq:classical-variance}
\end{equation}
where $\mathfrak B(\rho)$ denotes the set of all orthonormal eigenbases of $\rho$. 
The maximum is attained by choosing, within each eigenspace of $\rho$, an orthonormal basis that diagonalizes the compression of $A$ to that eigenspace. Indeed, this maximization admits the basis-independent characterization
\begin{equation}
V_\rho^{\mathrm{cl}}(A)
=
V_\rho\bigl(\mathcal P_\rho(A)\bigr),
\label{eq:classical-variance-pinching}
\end{equation}
where, if $\rho=\sum_{\alpha=1}^m r_\alpha P_\alpha$ is the spectral decomposition into distinct eigenvalues and the corresponding spectral projections,
\begin{equation}
\mathcal P_\rho(A):=\sum_{\alpha=1}^mP_\alpha A P_\alpha.
\end{equation}

The pinching $\mathcal P_\rho(A)$ removes precisely the components of $A$ connecting distinct eigenspaces of $\rho$, while retaining the part that commutes with $\rho$. Thus, $V_\rho^{\mathrm{cl}}(A)$ quantifies the maximal contribution to the variance that can be attributed to the commuting, or classical, part of the observable relative to the state.
For further details and the proof of the equivalence \eqref{eq:classical-variance-pinching}, see Ref.~\cite{YamashitaMayumiKimura2026}.

\subsection{Sharp Hiai--Petz uncertainty relation}

We can now state the main result of this section.
\begin{theorem}[Optimal Hiai--Petz single-observable uncertainty relation]
\label{thm:optimal-HP}
Let $f$ be a regular standard operator monotone function, and let $0<\theta\le1$. Then, for any density operator $\rho$, every observable $A$ satisfies the sharp uncertainty relation
\begin{equation}
V_\rho(A)
\ge
V_\rho^{\mathrm{cl}}(A)
+
c_{f,\theta}^{\mathrm{opt}}(\rho)
\mathcal I_\rho^{f,\theta}(A),
\label{eq:optimal-HP-UR}
\end{equation}
where, for a nonmaximally mixed state, the optimal coefficient is given by
\begin{equation}
c_{f,\theta}^{\mathrm{opt}}(\rho)
=
\frac{
(\lambda_{\max}+\lambda_{\min})
M_f(\lambda_{\max},\lambda_{\min})^\theta
}{
f(0)^\theta
(\lambda_{\max}-\lambda_{\min})^2
}.
\label{eq:copt-HP}
\end{equation}
Here $\lambda_{\max}$ and $\lambda_{\min}$ denote the largest and smallest eigenvalues of $\rho$, respectively. For the maximally mixed state, the second term on the right-hand side of \eqref{eq:optimal-HP-UR} is understood to vanish. In this case, $V_\rho^{\mathrm{cl}}(A)=V_\rho(A)$, and hence \eqref{eq:optimal-HP-UR} is saturated for every observable.
\end{theorem}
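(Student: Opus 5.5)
The plan is to reduce the inequality to a pairwise comparison of spectral kernels, in an eigenbasis adapted to both $\rho$ and $A$, and then to show that the resulting pairwise ratio is minimized at the extreme eigenvalues.

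\textbf{Step 1 (adapted basis).} Choose an eigenbasis $\mathcal B=\{\ket{i}\}$ of $\rho$ that attains the maximum in \eqref{eq:classical-variance}. As recalled after \eqref{eq:classical-variance}, such a basis diagonalizes the compression of $A$ to each eigenspace. Hence $A_{ij}=0$ whenever $i\neq j$ and $\lambda_i=\lambda_j$. Writing $A_{ij}=\bra{i}A\ket{j}$ and expanding $\bra{i}A^2\ket{i}=\sum_j|A_{ij}|^2$, I expect
\begin{equation*}
V_\rho(A)-V_\rho^{\mathrm{cl}}(A)
=\sum_i\lambda_i\sum_{j\neq i}|A_{ij}|^2
=\sum_{i<j,\ \lambda_i\neq\lambda_j}(\lambda_i+\lambda_j)|A_{ij}|^2 .
\end{equation*}
The spectral formula \eqref{eq:HPskew-spectral} gives $\mathcal I_\rho^{f,\theta}(A)=\sum_{i<j,\ \lambda_i\ne\lambda_j}k_{f,\theta}(\lambda_i,\lambda_j)|A_{ij}|^2$, with the kernel $k_{f,\theta}$ from \eqref{eq:HP-kernel}. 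The inequality \eqref{eq:optimal-HP-UR} with coefficient $c$ therefore holds for all $A$ as soon as $c\le g(\lambda_i,\lambda_j)$ for every pair of distinct eigenvalues, where
\begin{equation*}
g(x,y):=\frac{(x+y)M_f(x,y)^\theta}{f(0)^\theta(x-y)^2}.
\end{equation*}

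\textbf{Step 2 (monotonicity of $g$, the main step).} I will show that the minimum of $g$ over pairs of distinct eigenvalues is attained at $(\lambda_{\max},\lambda_{\min})$. Fix $x>y\ge0$.

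\emph{Increasing $y$ toward $x$.} The numerator increases, because $M_f$ is monotone in each argument. The denominator $(x-y)^2$ decreases. Hence $g$ increases.

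\emph{Increasing $x$ away from $y$.} This is the delicate direction. Here I will use Euler's identity for the degree-one homogeneous mean,
\begin{equation*}
x\,\partial_xM_f+y\,\partial_yM_f=M_f ,
\end{equation*}
in which both terms are nonnegative. This yields $\partial_x\log M_f\le 1/x$. Consequently
\begin{equation*}
\partial_x\log g\le\frac1{x+y}+\frac\theta x-\frac2{x-y}\le \frac2x-\frac2{x-y}<0,
\end{equation*}
using $\theta\le1$ and $x-y<x$ when $y>0$. The boundary case $y=0$ follows by continuity, using the regularity extension \eqref{eq:Mf-boundary}.

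Combining the two directions, any pair $x>y$ of eigenvalues satisfies $g(x,y)\ge g(\lambda_{\max},y)\ge g(\lambda_{\max},\lambda_{\min})$. This value is exactly \eqref{eq:copt-HP}, which proves the inequality.

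\textbf{Step 3 (sharpness).} Take $A=\ketbra{i_{\max}}{i_{\min}}+\ketbra{i_{\min}}{i_{\max}}$, where $\ket{i_{\max}}$ and $\ket{i_{\min}}$ are eigenvectors for $\lambda_{\max}\neq\lambda_{\min}$. Its pinching $\mathcal P_\rho(A)$ vanishes, so $V^{\mathrm{cl}}_\rho(A)=0$ by \eqref{eq:classical-variance-pinching}. A direct computation gives $V_\rho(A)=\lambda_{\max}+\lambda_{\min}$ and $\mathcal I_\rho^{f,\theta}(A)=k_{f,\theta}(\lambda_{\max},\lambda_{\min})$, so equality holds. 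This shows that no larger coefficient is possible even in the weaker relation $V_\rho\ge c\,\mathcal I_\rho^{f,\theta}$ of \eqref{eq:copt-general}, which omits the nonnegative term $V^{\mathrm{cl}}_\rho$. Hence the coefficient is optimal in both senses.

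\textbf{Step 4 (maximally mixed state).} Here $\mathcal P_\rho(A)=A$, so $V^{\mathrm{cl}}_\rho(A)=V_\rho(A)$ and the relation is saturated for every observable.
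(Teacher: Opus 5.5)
Your proof is correct and follows the paper's argument step for step: the same adapted eigenbasis and pairwise decomposition of $V_\rho(A)-V_\rho^{\mathrm{cl}}(A)$, the same reduction to minimizing the pairwise ratio over pairs of distinct eigenvalues, and the same test observable $A_*$ for sharpness. The only variation is in the extremal-pair step, where you prove coordinatewise monotonicity using Euler's identity; the paper instead writes $C_{f,\theta}(x,y)=f(0)^{-\theta}x^{\theta-1}(1+r)f(r)^\theta/(1-r)^2$ with $r=y/x$ and reads off the bound from $x\le\lambda_{\max}$ and $r\ge\lambda_{\min}/\lambda_{\max}$, which avoids needing differentiability of $f$.
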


\begin{remark}
Remarkably, the optimal coefficient in Theorem~\ref{thm:optimal-HP} is determined solely by the smallest and largest eigenvalues of $\rho$, although the Hiai--Petz skew information itself generally depends on the full spectrum.
Both $V_\rho(A)-V_\rho^{\mathrm{cl}}(A)$ and $\mathcal I_\rho^{f,\theta}(A)$ decompose into sums over the same eigenvalue pairs, with nonnegative coefficients multiplying $|A_{ij}|^2$.
Hence the optimization reduces to minimizing the corresponding pairwise ratio.
The nontrivial point is that this minimum is always attained by the extremal pair $(\lambda_{\max},\lambda_{\min})$.
A similar dependence on extremal eigenvalues was observed in the optimization of Robertson-type uncertainty relations \cite{KimuraMayumiYamashita}.
\end{remark}

\begin{proof}
We first consider the maximally mixed state. In this case, $\rho$ commutes with every observable, and hence $\mathcal I_\rho^{f,\theta}(A)=0$. Moreover, since every orthonormal basis is an eigenbasis of $\rho$, the basis in the definition of $V_\rho^{\mathrm{cl}}(A)$ may be chosen to diagonalize $A$. It follows that $V_\rho^{\mathrm{cl}}(A)=V_\rho(A)$, and therefore \eqref{eq:optimal-HP-UR} is saturated.

Suppose now that $\rho$ is nonmaximally mixed. Choose an eigenvalue decomposition $\rho=\sum_i\lambda_i\ketbra{i}{i}$ whose eigenbasis attains the maximum in \eqref{eq:classical-variance}. By construction, within each degenerate eigenspace the basis diagonalizes the corresponding compression of $A$. Hence $A_{ij}=0$ whenever $i\ne j$ and $\lambda_i=\lambda_j$. A direct computation then gives
\begin{equation}
V_\rho(A)
=
V_\rho^{\mathrm{cl}}(A)
+
\sum_{\substack{i<j\\ \lambda_i\ne\lambda_j}}
(\lambda_i+\lambda_j)|A_{ij}|^2.
\label{eq:variance-classical-decomposition}
\end{equation}
On the other hand, the eigenvalue representation \eqref{eq:HPskew-spectral} gives
\begin{equation}
\mathcal I_\rho^{f,\theta}(A)
=
f(0)^\theta
\sum_{\substack{i<j\\ \lambda_i\ne\lambda_j}}
\frac{(\lambda_i-\lambda_j)^2}
{M_f(\lambda_i,\lambda_j)^\theta}
|A_{ij}|^2.
\label{eq:HP-distinct-eigenvalues}
\end{equation}

We next show that the optimal pairwise coefficient is always attained by the extremal eigenvalues. For $x>y\ge0$, define
\begin{equation}
C_{f,\theta}(x,y)
:=
\frac{(x+y)M_f(x,y)^\theta}
{f(0)^\theta(x-y)^2}.
\label{eq:pairwise-coefficient}
\end{equation}
Writing $r=y/x$, the homogeneity and symmetry of $M_f$ give $M_f(x,y)=xf(r)$, and hence
\begin{equation}
C_{f,\theta}(x,y)
=
\frac{x^{\theta-1}}{f(0)^\theta}
\frac{(1+r)f(r)^\theta}{(1-r)^2}.
\label{eq:pairwise-ratio}
\end{equation}
Since $f$ is operator monotone, it is increasing as a scalar function. Therefore
\begin{equation}
r\longmapsto
\frac{(1+r)f(r)^\theta}{(1-r)^2}
\end{equation}
is increasing on $[0,1)$. For any pair of distinct eigenvalues $x>y$, we have
\begin{equation}
x\le\lambda_{\max},
\quad
\frac{y}{x}\ge\frac{\lambda_{\min}}{\lambda_{\max}}.
\end{equation}
Since $0<\theta\le1$, we also have $x^{\theta-1}\ge\lambda_{\max}^{\theta-1}$. It follows from \eqref{eq:pairwise-ratio} that
\begin{equation}
C_{f,\theta}(x,y)
\ge
C_{f,\theta}(\lambda_{\max},\lambda_{\min})
=
c_{f,\theta}^{\mathrm{opt}}(\rho).
\label{eq:extremal-pair}
\end{equation}
Thus, for every pair with $\lambda_i\ne\lambda_j$,
\begin{equation}
\lambda_i+\lambda_j
\ge
c_{f,\theta}^{\mathrm{opt}}(\rho)
f(0)^\theta
\frac{(\lambda_i-\lambda_j)^2}
{M_f(\lambda_i,\lambda_j)^\theta}.
\end{equation}
Multiplying by $|A_{ij}|^2$, summing over all such pairs, and using \eqref{eq:variance-classical-decomposition} and \eqref{eq:HP-distinct-eigenvalues} yields \eqref{eq:optimal-HP-UR}.

It remains to prove optimality. Let $\ket{\psi_{\max}}$ and $\ket{\psi_{\min}}$ be unit eigenvectors corresponding to $\lambda_{\max}$ and $\lambda_{\min}$, respectively, and define
\begin{equation}
A_*=
\ketbra{\psi_{\max}}{\psi_{\min}}
+
\ketbra{\psi_{\min}}{\psi_{\max}}.
\label{eq:optimal-observable}
\end{equation}
Since $A_*$ connects two distinct eigenspaces of $\rho$, one has $\mathcal P_\rho(A_*)=0$ and therefore $V_\rho^{\mathrm{cl}}(A_*)=0$. Moreover, only the pair $(\lambda_{\max},\lambda_{\min})$ contributes to both \eqref{eq:variance-classical-decomposition} and \eqref{eq:HP-distinct-eigenvalues}. Equality is therefore attained in \eqref{eq:optimal-HP-UR}. Hence the coefficient \eqref{eq:copt-HP} cannot be increased.
\end{proof}

Dropping the nonnegative classical contribution from \eqref{eq:optimal-HP-UR}, we immediately obtain the following sharp bound.

\begin{corollary}\label{cor:optimal-HP-simple}
Let $f$ be a regular standard operator monotone function, let $0<\theta\le1$, and let $\rho$ be a nonmaximally mixed state.
Then every observable $A$ satisfies
\begin{equation}
V_\rho(A)
\ge
c_{f,\theta}^{\mathrm{opt}}(\rho)
\mathcal I_\rho^{f,\theta}(A),
\label{eq:optimal-HP-UR-simple}
\end{equation}
where $c_{f,\theta}^{\mathrm{opt}}(\rho)$ is given by \eqref{eq:copt-HP}.
The coefficient is optimal for the fixed state $\rho$.
\end{corollary}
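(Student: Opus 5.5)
The corollary follows from Theorem~\ref{thm:optimal-HP} in two short steps: first derive the inequality, then show the coefficient is optimal in the sense of \eqref{eq:copt-general}.

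\emph{Inequality.} For a nonmaximally mixed $\rho$, \eqref{eq:optimal-HP-UR} gives
\[
V_\rho(A)\ge V_\rho^{\mathrm{cl}}(A)+c_{f,\theta}^{\mathrm{opt}}(\rho)\,\mathcal I_\rho^{f,\theta}(A).
\]
The classical term is nonnegative. By \eqref{eq:classical-variance} it is a maximum of the quantities \eqref{eq:classical-variance-basis}, and each of these is a $\lambda_i$-weighted sum of squares; equivalently, by \eqref{eq:classical-variance-pinching} it is the variance $V_\rho(\mathcal P_\rho(A))\ge0$. Discarding $V_\rho^{\mathrm{cl}}(A)$ therefore gives \eqref{eq:optimal-HP-UR-simple}.

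\emph{Optimality.} Fix $\rho$ and use the observable $A_*$ of \eqref{eq:optimal-observable}, built from eigenvectors for $\lambda_{\max}$ and $\lambda_{\min}$. Because $\rho$ is not maximally mixed, $\lambda_{\max}>\lambda_{\min}$, so $A_*$ does not commute with $\rho$. Property (Q2) from Proposition~\ref{prop:HP-Q1-Q5} then gives $\mathcal I_\rho^{f,\theta}(A_*)>0$.

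From \eqref{eq:variance-classical-decomposition} and \eqref{eq:HP-distinct-eigenvalues}, only the pair $(\lambda_{\max},\lambda_{\min})$ contributes, and $|A_{ij}|^2=1$ for that pair. Also $V_\rho^{\mathrm{cl}}(A_*)=0$, since $\mathcal P_\rho(A_*)=0$. Hence
\[
V_\rho(A_*)=\lambda_{\max}+\lambda_{\min},
\qquad
\mathcal I_\rho^{f,\theta}(A_*)=\frac{f(0)^\theta(\lambda_{\max}-\lambda_{\min})^2}{M_f(\lambda_{\max},\lambda_{\min})^\theta},
\]
and their ratio is exactly $c_{f,\theta}^{\mathrm{opt}}(\rho)$. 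Now suppose some $c>c_{f,\theta}^{\mathrm{opt}}(\rho)$ satisfied the bound for all $A$. Taking $A=A_*$ would give
\[
V_\rho(A_*)\ge c\,\mathcal I_\rho^{f,\theta}(A_*)>c_{f,\theta}^{\mathrm{opt}}(\rho)\,\mathcal I_\rho^{f,\theta}(A_*)=V_\rho(A_*),
\]
which is a contradiction. So the supremum in \eqref{eq:copt-general} equals \eqref{eq:copt-HP}.

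The only point needing care is strict positivity of $\mathcal I_\rho^{f,\theta}(A_*)$, which is what makes the strict inequality, and hence the contradiction, valid. This is where the nonmaximally mixed hypothesis is used.
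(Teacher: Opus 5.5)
Your proof is correct and follows the paper's own route: drop the nonnegative term $V_\rho^{\mathrm{cl}}(A)$ from Theorem~\ref{thm:optimal-HP}, then use the observable $A_*$ of \eqref{eq:optimal-observable}, for which $V_\rho^{\mathrm{cl}}(A_*)=0$ and equality holds. The paper leaves implicit that $\mathcal I_\rho^{f,\theta}(A_*)>0$; your explicit check of this is what turns saturation into optimality of the coefficient.
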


Indeed, the observable $A_*$ in \eqref{eq:optimal-observable} satisfies $V_\rho^{\mathrm{cl}}(A_*)=0$ and saturates \eqref{eq:optimal-HP-UR-simple}.
Thus, the classical contribution $V_\rho^{\mathrm{cl}}(A)$ can be retained without reducing the optimal coefficient of the intrinsic quantum uncertainty term.
For the maximally mixed state, $\mathcal I_\rho^{f,\theta}(A)=0$ for every observable, so \eqref{eq:optimal-HP-UR-simple} reduces to the trivial inequality $V_\rho(A)\ge0$ and no finite optimal coefficient is singled out.

\begin{remark}
An interesting distinction from Hansen's metric-adjusted class appears for non-faithful states.
If $\lambda_{\min}=0$, the optimal coefficient reduces to
\begin{equation}
c_{f,\theta}^{\mathrm{opt}}(\rho)
=
\lambda_{\max}^{\theta-1}.
\end{equation}
On the metric-adjusted boundary $\theta=1$, one therefore has $c_{f,1}^{\mathrm{opt}}(\rho)=1$, so that no state-dependent improvement of the coefficient occurs, even for a rank-deficient mixed state.
By contrast, for $0<\theta<1$, every rank-deficient mixed state satisfies $\lambda_{\max}<1$ and hence $c_{f,\theta}^{\mathrm{opt}}(\rho)>1$.
This improvement persists at the continuous endpoint $\theta=0$, corresponding to $K_1$, where $c^{\mathrm{opt}}(\rho)=1/\lambda_{\max}$.
Thus the non-metric Hiai--Petz regime, together with its $K_1$ endpoint, exhibits a genuine state-dependent enhancement even in the presence of a zero eigenvalue.
For a pure state, $\lambda_{\max}=1$, and the coefficient remains equal to one throughout.
\end{remark}

Theorem~\ref{thm:optimal-HP} recovers all the optimal relations obtained in Ref.~\cite{YamashitaMayumiKimura2026} for the Wigner--Yanase and Wigner--Yanase--Dyson skew informations, the SLD quantum Fisher information, and the power-commutator family.

For $0<\alpha<1$, the Wigner--Yanase--Dyson skew information corresponds to the operator monotone function \eqref{eq:f-WYD}, whose associated mean is
\begin{equation}
M_{f_\alpha^{\mathrm{WYD}}}(x,y)
=
\alpha(1-\alpha)
\frac{(x-y)^2}
{(x^\alpha-y^\alpha)(x^{1-\alpha}-y^{1-\alpha})}.
\end{equation}
Setting $\theta=1$ in \eqref{eq:copt-HP} gives
\begin{equation}
c_{\mathrm{WYD},\alpha}^{\mathrm{opt}}(\rho)
=
\frac{\lambda_{\max}+\lambda_{\min}}
{(\lambda_{\max}^\alpha-\lambda_{\min}^\alpha)
(\lambda_{\max}^{1-\alpha}-\lambda_{\min}^{1-\alpha})}.
\label{eq:copt-WYD}
\end{equation}
Hence
\begin{equation}
V_\rho(A)
\ge
V_\rho^{\mathrm{cl}}(A)
+
\frac{\lambda_{\max}+\lambda_{\min}}
{(\lambda_{\max}^\alpha-\lambda_{\min}^\alpha)
(\lambda_{\max}^{1-\alpha}-\lambda_{\min}^{1-\alpha})}
I_{\rho,\alpha}^{\mathrm{WYD}}(A).
\label{eq:optimal-WYD}
\end{equation}
At $\alpha=1/2$, this reduces to the Wigner--Yanase relation
\begin{equation}
V_\rho(A)
\ge
V_\rho^{\mathrm{cl}}(A)
+
\frac{\lambda_{\max}+\lambda_{\min}}
{(\sqrt{\lambda_{\max}}-\sqrt{\lambda_{\min}})^2}
I_\rho^{\mathrm{WY}}(A).
\label{eq:optimal-WY}
\end{equation}

For the SLD function \eqref{eq:f-SLD},
\begin{equation}
M_{f_{\mathrm{SLD}}}(x,y)=\frac{x+y}{2}.
\end{equation}
Thus, at $\theta=1$, Eq.~\eqref{eq:copt-HP} gives
\begin{equation}
c_{\mathrm{SLD}}^{\mathrm{opt}}(\rho)
=
\frac{(\lambda_{\max}+\lambda_{\min})^2}
{(\lambda_{\max}-\lambda_{\min})^2}.
\label{eq:copt-SLD}
\end{equation}
Since $I_\rho^{f_{\mathrm{SLD}}}(A)=\frac14F_Q^{\mathrm{SLD}}(\rho,A)$, the corresponding sharp relation is
\begin{equation}
V_\rho(A)
\ge
V_\rho^{\mathrm{cl}}(A)
+
\frac{1}{4}\frac{(\lambda_{\max}+\lambda_{\min})^2}
{(\lambda_{\max}-\lambda_{\min})^2}
F_Q^{\mathrm{SLD}}(\rho,A).
\label{eq:optimal-SLD}
\end{equation}

For $1/2\le s<1$, Proposition~\ref{prop:power-HP} identifies $K_s$ with the Hiai--Petz skew information determined by $f_s$ in \eqref{eq:fs} and $\theta_s=2(1-s)$. Using \eqref{eq:Mfs-theta} and \eqref{eq:fs0-theta} in \eqref{eq:copt-HP}, we obtain
\begin{equation}
c_{K_s}^{\mathrm{opt}}(\rho)
=
\frac{\lambda_{\max}+\lambda_{\min}}
{(\lambda_{\max}^s-\lambda_{\min}^s)^2}.
\label{eq:copt-Ks}
\end{equation}
Consequently,
\begin{equation}
V_\rho(A)
\ge
V_\rho^{\mathrm{cl}}(A)
+
\frac{\lambda_{\max}+\lambda_{\min}}
{(\lambda_{\max}^s-\lambda_{\min}^s)^2}
K_s(\rho,A),
\quad
\frac12\le s\le1.
\label{eq:optimal-Ks}
\end{equation}
At $s=1/2$, this reproduces \eqref{eq:optimal-WY}, while at $s=1$ it gives
\begin{equation}
V_\rho(A)
\ge
V_\rho^{\mathrm{cl}}(A)
+
\frac{\lambda_{\max}+\lambda_{\min}}
{(\lambda_{\max}-\lambda_{\min})^2}
K_1(\rho,A).
\label{eq:optimal-K1}
\end{equation}
Thus the power family yields a continuous family of sharp uncertainty relations connecting the Wigner--Yanase bound to the $K_1$ bound.

The second interpolation structure discussed above also carries over directly to the sharp uncertainty relations. 
For the fixed-SLD Hiai--Petz path \eqref{eq:HP-SLD-family}, Theorem~\ref{thm:optimal-HP} gives
\begin{equation}
V_\rho(A)
\ge
V_\rho^{\mathrm{cl}}(A)
+
\frac{(\lambda_{\max}+\lambda_{\min})^{1+\theta}}
{(\lambda_{\max}-\lambda_{\min})^2}
\mathcal I_\rho^{\mathrm{SLD},\theta}(A),
\quad
0<\theta\le1.
\label{eq:optimal-HP-SLD}
\end{equation}
The coefficient is optimal for the fixed state $\rho$. At $\theta=1$, this reduces to \eqref{eq:optimal-SLD}, whereas the continuous limit $\theta\downarrow0$ yields \eqref{eq:optimal-K1}. Thus the fixed-SLD Hiai--Petz path provides a continuous family of sharp uncertainty relations connecting the $K_1$ bound to the optimal SLD quantum Fisher information bound.

\section{Conclusion and Discussion}
\label{sec:conclusion}

We have developed a class of intrinsic quantum uncertainty measures from normalized commutator restrictions of the Hiai--Petz quasi-entropy-type functionals. 
The construction preserves positivity, faithfulness to noncommutativity, pure-state normalization, convexity, and unitary covariance, while ordinary additivity is not imposed as a defining requirement. 
Hansen's metric-adjusted skew informations are recovered at the boundary $\theta=1$, whereas the interior $0<\theta<1$ provides genuinely nonadditive extensions with a controlled composition law.

Our main quantitative result is the exact solution of the fixed-state optimization problem for the full Hiai--Petz class. 
Even after the maximal classical contribution to the variance is retained, the optimal coefficient admits a closed form depending only on the largest and smallest eigenvalues of the state. 
This yields sharp single-observable uncertainty relations for Hansen's metric-adjusted skew informations and for the nonadditive Hiai--Petz family. In particular, it includes the SLD quantum Fisher information and the power-commutator family as important special cases.

A central structural example is the power-commutator family $K_s$. We showed that, for $1/2\le s<1$, $K_s$ can be realized as a Hiai--Petz skew information through a family of Stolarsky operator means. 
This representation proves, in particular, the nontrivial convexity of $K_s$ for $1/2<s<1$. 
It also places the power family naturally within the Hiai--Petz interpolation structure: $K_{1/2}$ coincides with the Wigner--Yanase skew information, while $K_s$ approaches $K_1$ as $s\uparrow1$.

More generally, for every fixed regular standard operator monotone function $f$, the Hiai--Petz pairwise kernel gives a geometric interpolation between the kernel of the metric-adjusted skew information $I^f$ at $\theta=1$ and that of $K_1$ at the continuous endpoint $\theta=0$. 
Thus the entire Hansen class forms one boundary of the Hiai--Petz family, while all fixed-$f$ paths meet at the common endpoint $K_1$. The power family traces a distinguished curve through this structure because both the Stolarsky function $f_s$ and the parameter $\theta_s=2(1-s)$ vary with $s$. 
By contrast, fixing the SLD representing function gives a direct interpolation between $K_1$ and one quarter of the SLD quantum Fisher information, the maximal intrinsic quantum uncertainty within the present framework. 
The exact power-trace-weighted composition law further distinguishes the interior $0<\theta<1$ from Hansen's additive class and explains why $K_s$ for $1/2<s<1$, as well as the endpoint $K_1$ when regarded as a dimension-independent family, lies outside the metric-adjusted class.

The extremal-eigenvalue structure of the optimal coefficient suggests a broader question concerning the spectral sectors selected by sharp uncertainty relations.
In the present setting, the pairwise decomposition reduces the optimization to a single pair of eigenspaces, while the nontrivial result is that the minimizing pair is always associated with $\lambda_{\max}$ and $\lambda_{\min}$.
A similar extremal-eigenvalue reduction occurs in optimal Robertson-type relations \cite{KimuraMayumiYamashita}, whereas uncertainty relations involving the absolute value of the commutator can instead select the two smallest eigenvalues \cite{KimuraMayumiOhno}.
It would therefore be interesting to understand what structural properties determine which spectral sector governs a sharp uncertainty relation, particularly in problems that do not admit the pairwise decomposition available in the present setting.
Such a perspective may reveal a broader principle underlying spectral reductions in optimal uncertainty relations.

Finally, it would also be interesting to clarify the operational meaning of the Hiai--Petz interpolation parameter and to determine whether the different interpolation structures identified here correspond to distinct information-theoretic or metrological tasks.

\section*{Acknowledgments}

The author is deeply grateful to Professor Fumio Hiai for his mathematical influence, encouragement, and many years of kindness. 
The author also thanks Haruki Yamashita for valuable discussions and comments leading to this study.
This work was supported by JSPS KAKENHI Grant No. 24K06873.

\bibliographystyle{unsrtnat}
\bibliography{refsHiai}

\end{document}